\newtheorem{theorem}{Theorem}{}
\newtheorem{proposition}[theorem]{Proposition}
\newtheorem{definition}[theorem]{Definition}
\newenvironment{example}[1][Example.]{\begin{trivlist}
\item[\hskip \labelsep {\bfseries #1}]}{\end{trivlist}}
\renewcommand{\vec}[1]{\ensuremath{\MakeLowercase{\boldsymbol{#1}}}}
\providecommand{\vc}{\vec{c}}
\providecommand{\vx}{\vec{x}} \providecommand{\vy}{\vec{y}}
\newcommand{\cM}{\boldsymbol{{M}}}
\newcommand{\cMc}{\cM^{(\mathsf{c})}}
\newcommand{\cMs}{\cM^{(\mathsf{s})}}
\newcommand{\F}{\mathds{F}}
\newcommand{\supp}{\mathsf{supp}}
\title{Group Testing with Probabilistic Tests: Theory, Design and Application} \author{Mahdi Cheraghchi,
 \IEEEmembership{Member, IEEE},
 Ali Hormati,
  \IEEEmembership{Member, IEEE},
   Amin Karbasi,
   \IEEEmembership{Student Member, IEEE},
  and Martin Vetterli
  \IEEEmembership{Fellow, IEEE}

  \thanks{{\footnotesize  
  M.~Cheraghchi is with the Department of Computer Science, University of Texas at Austin, USA (email: mahdi@cs.utexas.edu).
  
  A.~Hormati, A.~Karbasi, and M.~Vetterli are with the School of Computer and Communication
      Sciences, EPFL, Switzerland (emails: \{ali.hormati, amin.karbasi, martin.vetterli\}@epfl.ch).
%      Ecole Polytechnique F\'{e}d\'{e}rale de Lausanne
%      (EPFL), CH-1015 Lausanne, Switzerland. 
      M.~Vetterli is also with the Department of Electrical Engineering and Computer Sciences, University of California at Berkeley, USA.
      
      A preliminary summary of this work appeared in proceedings of the 47th  Allerton conference on communication, control, and computing~\cite{CheraghchiHKV09}.} Part of the work was done while M.~Cheraghchi was with the School of Computer and Communication Sciences at EPFL (supported by SNSF grant 200020-115983/1).
      The remaining authors were supported by ERC grant 247006 and SNSF grants 200021-121935 and 5005-67322.}  }
\date{}
\begin{document}

\maketitle

\begin{abstract}

\noindent Identification of defective members of large populations has been widely studied in the statistics community under the name of group testing. It involves grouping subsets of items into different pools and detecting defective members based on the set of test results obtained for each pool.

In a classical noiseless group testing setup, it is assumed that the sampling procedure is fully known to the reconstruction algorithm, in the sense that the existence of a defective member in a pool results in the test outcome of that pool to be positive. However, this may not be always a valid assumption in some cases of interest. In particular, we consider the case where the defective items in a pool can become independently \emph{inactive} with a certain probability. Hence, one may obtain a negative test result in a pool despite containing some defective items. As a result, any sampling and reconstruction method should be able to cope with two different types of uncertainty, i.e., the unknown set of defective items and the partially unknown, probabilistic testing procedure.

In this work, motivated by the application of detecting infected people in viral epidemics, we design non-adaptive sampling procedures that allow successful identification of the defective items through a set of probabilistic tests. Our design requires only a small number of tests to single out the defective items. In particular, for a population of size $N$ and at most $K$ defective items with activation probability $p$, our results show that $M = O(K^2\log{(N/K)}/p^3)$ tests is sufficient if the sampling procedure should work for \emph{all} possible sets of defective items, while $M = O(K\log{(N)}/p^3)$ tests is enough to be successful for any single set of defective items. Moreover, we show that the defective members can be recovered using a simple reconstruction algorithm with complexity of $O(MN)$.
\end{abstract}
\begin{IEEEkeywords}
Group testing, probabilistic tests, sparsity recovery, compressed sensing, epidemiology.
\end{IEEEkeywords}

\begin{section}{Introduction}
\IEEEPARstart{I}{nverse} problems, with the goal of recovering a signal from partial and noisy observations, come in many different formulations and arise in many  applications. One important property of an inverse problem is to be well-posed, i.e., there should exist a unique and stable solution to the problem~\cite{Hadamard23}. In this regard, prior information about the solution, like sparsity, can be used as a ``regularizer'' to transform an ill-posed problem to a well-posed one. In this work, we look at a particular inverse problem with less measurements than the number of unknowns (ill-posed) but with sparsity constraints on the solution. As will be explained in detail, the interesting aspect of this problem is that the sampling procedure is probabilistic and not fully known at recovery time.

Suppose that in a large set of items of size $N$, at most $K \ll N$ of them are defective and we wish to identify this small set of defective items. By testing each member of the set separately, we can expect the cost of the testing procedure to be large. If we could instead pool a number of items together and test the pool collectively, the number of tests required might be reduced. The ultimate goal is to construct a pooling design to identify the defective items while minimizing the number of tests.  This is the main conceptual idea behind the classical \emph{group testing} problem which was introduced by Dorfman~\cite{Dorfman43} and later found applications in a variety of areas. The first important application of the idea dates back to World War~II when it was suggested for syphilis screening. A few other examples of group testing applications include testing for defective items (e.g., defective light bulbs or resistors) as a part of industrial quality assurance~\cite{SobelG59}, DNA sequencing~\cite{PevznerL94}, DNA library screening in molecular biology (see, e.g.,~\cite{NgoD00, SchliepTR03,Macula99,Macula99b,ChengD08} and the references therein), multi-access communication~\cite{Wolf85}, data compression~\cite{HongL02}, pattern matching~\cite{CliffordEPR07}, streaming algorithms~\cite{CormodeM05}, software testing~\cite{BlassG02} and compressed sensing~\cite{CormodeM06}. See the books by Du and Hwang for a detailed account of the major developments in this area~\cite{DuH99,DuH06}.

In a classical group testing setup, it is assumed that the reconstruction algorithm has full access to the sampling procedure, i.e., it knows which items participate in each pool. Moreover, if the tests are reliable, the existence of a defective item in a pool results in a positive test outcome. In an unreliable setting, the test results can be contaminated by false positives and/or false negatives. Compared to the reliable setting, special care should be taken to tackle this uncertainty in order to successfully identify the defective items. However, in some cases of interest, there can exist other types of uncertainty that challenge the recovery procedure.

In this work, we investigate the group testing problem with probabilistic tests. In this setting, a defective item which participate in a pool can be \emph{inactive}, i.e., the test result of a pool can be negative despite containing some defective items. Therefore, a negative test result does not indicate that all the items in the corresponding pool are non-defective with certainty. We follow a probabilistic approach to model the activity of the defective items, i.e., each defective item is active independently in each pool with probability $p$. Therefore, the tests contain uncertainty not in the sense of false positives or false negatives, but in the sense of the underlying probabilistic testing procedure. More precisely, let us denote by $\cMc$ the designed contact matrix which indicates the items involved in each pool, i.e.,
\begin{equation*}
\cMc_{ij} = \left\{\begin{array}{l l}
     1 & \text{if test $i$ includes item $j$} \\
     0 & \text{otherwise.} \\
   \end{array}\right.
\end{equation*}
The probabilistic tests are then given by
\begin{equation*}
  \vy = \cMs \cdot \vx
\end{equation*}
where $\cMs$ denotes the probabilistic sampling matrix, $\vx$ is the sparse input vector and $\vy$ denotes the vector of test results. Each element of the contact matrix $\cMc$ is independently mapped to the corresponding element of the sampling matrix $\cMs$ by passing through the channel shown in Figure~\ref{fig:Zchannel}~\cite{AtiaS09}. In fact, the zeros in the contact matrix remain zeros in the sampling matrix while the ones are mapped to zeros with probability $1-p$.

\begin{figure}[t]
\center
\includegraphics[width=\columnwidth]{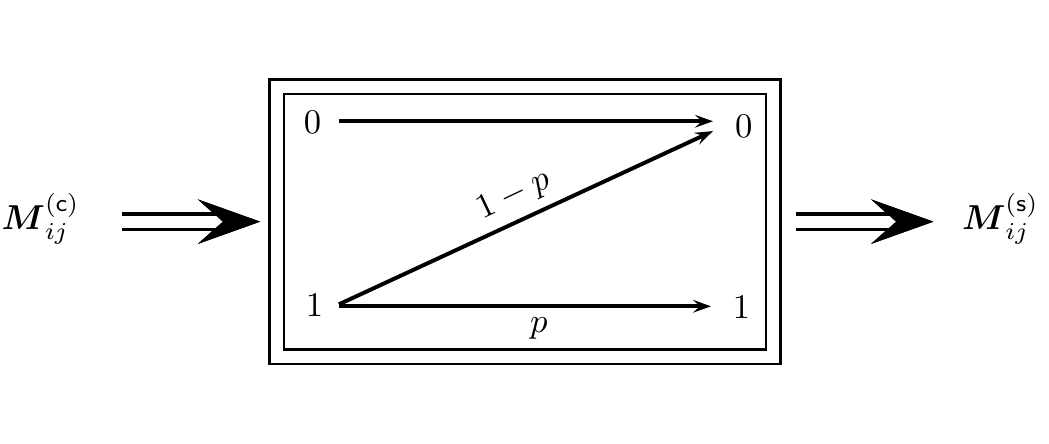}
\caption{Each element of the sampling matrix $\cMs_{ij}$ is generated independently from the corresponding element of the contact matrix $\cMc_{ij}$ by passing through a channel. The zeros in the contact matrix remain zeros in the sampling matrix while the ones are converted to zeros with probability $1-p$. }\label{fig:Zchannel}
\end{figure}

In this work, our goal is to design efficient sampling and recovery mechanisms to successfully identify the sparse vector $\vx$, despite the partially unknown testing procedure given by the sampling matrix $\cMs$. Our interest is in non-adaptive sampling procedures in which the sampling strategy (i.e., the contact matrix) is designed before seeing the test outcomes. In our analysis, we consider two different design strategies: In the \emph{per-instance} design,  the sampling procedure should be suitable for a fixed set of defective items with overwhelming probability while in the \emph{universal} design, it should be appropriate for all possible sets of defective items. We show that $M = O\left(K \log (N) / p^3\right)$ tests are sufficient for successful recovery in the per-instance scenario while we need $M = O\left(K^2 \log (N/K) / p^3\right)$ tests for the universal design. Moreover, the defective items can be recovered by a simple recovery algorithm with complexity of $O(MN)$.
For a constant parameter $p$, the bounds on the number of measurements are asymptotically tight up to logarithmic factors. This is simply because standard group testing that corresponds to the case $p=1$ requires $M=\Omega(K^2 \log_K(N))$ non-adaptive measurements in the universal setting (cf.\ \cite[Ch~7]{DuH99}) and $M=\Omega(K \log(N/K))$ measurements in the per-instance scenario (by a ``counting argument'').

\begin{figure}[t]
\centering
\includegraphics[width=0.7\columnwidth]{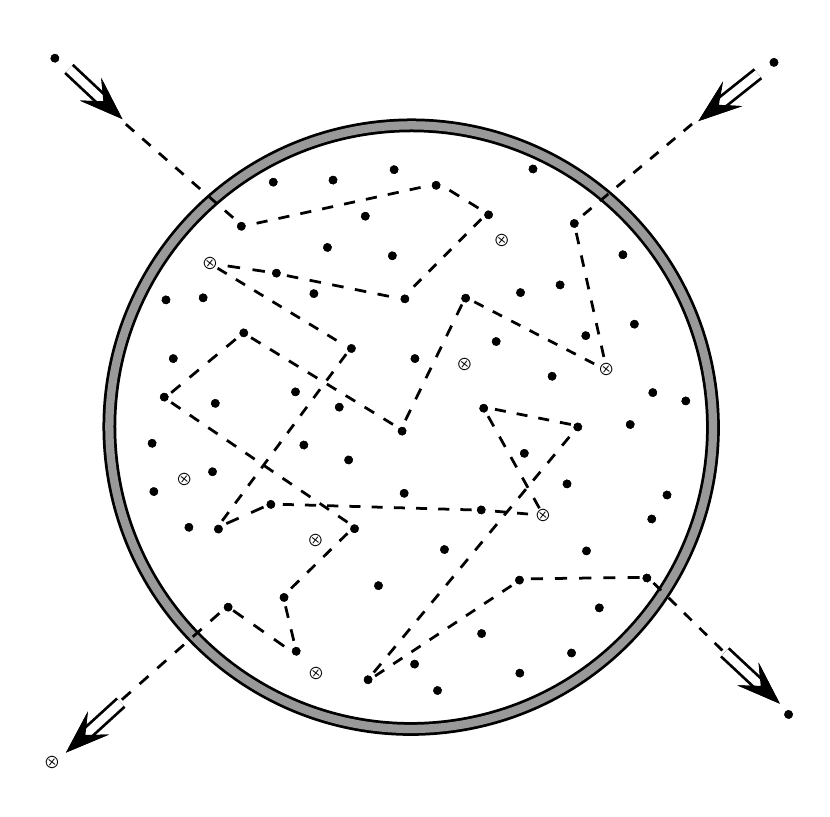}
\caption[ ]{Collective sampling using agents in viral epidemics. ($\bullet$) symbols represent healthy people while ($\otimes$) symbols indicate infected ones. The dashed lines connect the individuals contacted by the agents. An agent may remain healthy despite having contact with some infected people.}\label{chap6fig:epidemiologySetup}
\end{figure}

The above-mentioned probabilistic sampling procedure can well model the sampling process in an epidemiology application, where the goal is to successfully identify a sparse set of virally-infected people in a large population with a few collective tests. In viral epidemics, one way to acquire collective samples is by sending ``agents'' inside the population whose task is to contact people. Once an agent makes contact with an ``infected'' person, there is a \textit{chance} that he gets infected, too. By the end of the testing procedure, all agents are gathered and tested for the disease. Note that, when an agent contacts an infected person, he will get infected with a certain probability typically smaller than one. Hence, it may well happen that an agent's result is negative (meaning that he is not infected) despite a contact with some infected person. One can assume that each agent has a \emph{log file} by which one can figure out with whom he has made contact. One way to implement the log in practice is to use identifiable devices (for instance, cell phones) that can exchange unique identifiers when in range. This way, one can for instance ask an agent to randomly meet a certain number of people in the population and at the end, learn which individuals have been met from the data gathered by the device. However, one should assume that when an agent gets infected, the resulting infection will not be contagious, i.e., an agent never infects other people\footnote{This assumption is reasonable with certain diseases when there is an incubation time.}.
In the above model, the agents can in fact take many forms, including people who happen to be in contact with random individuals within the population (e.g., cashiers, bus drivers, etc.). 

The model explained using the epidemiology example above can in fact capture a broader range of settings, and in particular, any group testing problem where items can be defective with a certain probability. An example of such a setting is testing for faulty components (or modules) in digital logic systems (e.g., an integrated circuit).  This can be modeled through a probabilistic setting  where
the probability $p$ denotes the percentage of time that a faulty component
does not work correctly. In this way, one can use our group testing
results to efficiently localize the few unreliable circuitry elements
out of a large set of components on a given chip. One should note that
this model generalizes the classical application of group testing for
fault detection in electronic circuits where components are assumed to
be either fully reliable or fully unreliable (see \cite{ref:Kahng}).

The organization of this paper is as follows. We first give an overview of the related work in Section~\ref{chap6:relatedWork} which is then followed in Section~\ref{chap6:problemDefinition} by a more precise formulation of our problem. In order to solve the original stochastic problem, we first solve an adversarial variation of it in Section~\ref{chap6:adversarial} which we find more convenient to work with. Then, in Section~\ref{chap6sec:probDesign} and by using the results obtained from the adversarial setting, we design sensing and recovery procedures to efficiently solve the original stochastic problem. In Section~\ref{chap6:design_simulation}, we provide a systematic design procedure which provides us with the exact value for the number of tests, along with the other necessary parameters, as a function of the desired probability of unsuccessful reconstruction. We evaluate the design by doing a set of numerical experiments. The paper is summarized in Section~\ref{chap6:summary}.

\end{section}

\begin{section}{Related Work}
\label{chap6:relatedWork}
A large body of work in classical group testing has focused on combinatorial design, \textit{i.e.}, construction of contact matrices that satisfy a disjunctness property (the exact definition will be provided in Section~\ref{chap6:adversarial}). Matrices that have this property are of significant interest since they ensure identifiability of defective items and moreover, they lead to efficient recovery algorithms. This property has been extensively studied in \cite{Dyachkov82, Dyachkov83, Erdos82, Erdos85, Hwang87}. By using probabilistic methods, authors in \cite{Dyachkov82} developed upper and lower bounds on the number of tests/rows for the contact matrix to be $K$-disjunct. More precisely, they showed that the number of rows should at least scale asymptotically as $O(K^2 \log N/\log K)$ for exact reconstruction with worst case input. On the other hand, a randomly generated matrix will be $K$-disjunct with high probability if the number of rows scales as $O(K^2 \log (N/K))$ \cite{Dyachkov83}. Having a $K$-disjunct matrix, one can devise an efficient reconstruction algorithm to identify up to $K$ defective items. This is true if the reconstruction algorithm fully knows the sampling procedure. However, in our scenario, the decoder has to cope simultaneously with two sources of uncertainty, the unknown group of defective items and the partially unknown (or stochastic) sampling procedure. For this reason we need to use a more general form of disjunctness property.

We should also point out the relationship between our setup and the compressed sensing (CS) framework~\cite{CandesRT06, Donoho06}. In CS, a random projection of a  sparse signal is given and the goal is to find the position as well as the value of the non-zero entries of the input signal while keeping the number of projections to a minimum. Exploiting the similarity between group testing and CS, new recovery algorithms for sparse signal recovery have been proposed in \cite{Gilbert06} and \cite{Gilbert08}. Although in CS the goal is to measure and reconstruct sparse signals with few measurements, it differs in significant ways from our setup. In CS, it is typically assumed that the decoder knows the measurement matrix a priori\footnote{There are, however, works that consider compressed sensing under small perturbations of the measurement matrix (cf.\ \cite{ref:HSD10}).
A large body of the compressed sensing literature considers a noise model where the measurement outcomes are perturbed by a real-valued noise vector, while the measurement matrix is exact. See, for example, \cite{AkcakayaT08,WangWR10,ref:FRG09,ReevesG08} and the references therein.
}. However, this is not the case in our setup. In other words, by using the language of compressed sensing, the measurement matrix might be ``noisy'' and not precisely known to the decoder. As it turns out, by using a sufficient number of tests this issue can be resolved. Another difference is that in CS, the input and the measurement matrix are real valued and operations are performed on real numbers whereas in our case, the input vector, the measurement matrix and the arithmetic are all boolean.

Recently, the authors in~\cite{AtiaS09} investigated the probabilistic testing model that we consider in this paper from  an information theoretic perspective.  Unlike our combinatorial approach, they use information theoretic techniques to obtain bounds on the required number of tests. Namely, they get $M=O(K^2 (\log N) / p^{2})$ and $M=O(K (\log N) / p^{2})$ measurements for universal and per-instance scenarios, respectively, which is asymptotically comparable to what we obtain in this work (for a fixed $p$ and $K \ll N$). To achieve the bounds, they consider typical set decoding as the reconstruction method. However, they do not provide a practical, low complexity decoding algorithm for the reconstruction.

Another work that is relevant to ours is \cite{Hwang76} that considers group testing under a ``dilution'' effect. This model is targeted for biological experiments where combining items in a group may cause defected items go undetected when the size of the group is large. In particular, their model assumes that each item is independently defected with a certain (fixed) probability, and a defected item in a group of size $t$ affects the test with a probability proportional to $1/t$ (thus, a ``diluted'' group with few defectives becomes more likely to test negative as its size grows). They analyze the number of required tests using a simple (but sub-optimal) test design originally proposed by Dorfman~\cite{Dorfman43}.

\end{section}
\begin{section}{Problem Definition}
\label{chap6:problemDefinition}
To model the problem, we enumerate the population from $1$ to $N$ and the tests from $1$ to $M$. Let the nonzero entries of $\vx := (x_1, x_2, \ldots, x_N) \in \F_2^N$ indicate the defective items within the population, where $\F_2$ is the finite field of order $2$. Moreover, we assume that $\vx$ is a $K$-sparse vector, i.e., it has at most $K$ entries equal to one (corresponding to the defective items). We refer to the \emph{support set} of $\vx$, denoted by $\supp(\vx)$, as the set which contains positions of the nonzero entries.

As is typical in the literature of group testing, we introduce an $M\times N$ boolean \textit{contact} matrix $\cMc$ to model the set of non-adaptive tests. We set $\cMc_{ij}$ to one if and only if the $i$th test contains the $j$th item. The matrix $\cMc$ only shows which tests contain which items. In particular, it does not indicate whether the tests eventually get affected by the defective items. Let us assume that when a test contains a set of defective items, each of them makes the test positive independently with probability $p$, which is a fixed parameter that we call the \emph{activation probability}. Therefore, the real \textit{sampling} matrix $\cMs$ can be thought of as a variation of $\cMc$ in the following way: Each nonzero entry of $\cMc$ is flipped to $0$ independently with probability $1-p$\,. Then, the resulting matrix $\cMs$ is used just as in classical group testing to produce the test vector $\vy \in \F_2^M$
\begin{equation*}
  \vy=\cMs \cdot \vx
\end{equation*}
where the arithmetic is boolean, i.e., multiplication with the logical AND and addition with the logical OR.

The contact matrix $\cMc$, the test vector $\vy$, the upperbound on the number of nonzero entries $K$, and the activation probability $p$ are known to the decoder, whereas the sampling matrix $\cMs$ (under which the collective samples are taken) and the input vector $\vx$ are unknown. The task of the decoder is to identify the $K$ nonzero entries of $\vx$ based on the known parameters.

\begin{example}
As a toy example, consider a population with $6$ items where only two of them (items $3$ and $4$) are defective. We do a set of three tests, where the first one contains items $1,3,5$, the second one contains items $2,4,6$, and the third one contains items $2,3,5,6$. Therefore, the contact matrix and the input vector have the following form
\begin{eqnarray*}
 \vx&=&(\begin{array}{c c c c c c}  0 & 0 & 1 & 1 & 0 & 0\end{array})^T,\\
 \cMc&=&\left(\begin{array}{c c c c c c}
     1 & 0 & 1 & 0 & 1 & 0 \\
     0 & 1 & 0 & 1 & 0 & 1 \\
     0 & 1 & 1 & 0 & 1 & 1 \\
   \end{array}\right)\,.
\end{eqnarray*}
Let us assume that only the second test result is positive. This means that the test vector is
\begin{equation*}
  \vy=(\begin{array}{c c c}  0 & 1 & 0 \end{array})^T\,.
\end{equation*}
As we can observe, there are many possibilities for the sampling matrix, all of the following form:
\begin{eqnarray*}
  \cMs&=&\left(\begin{array}{c c c c c c}
     ? & 0 & ? & 0 & ? & 0 \\
     0 & ? & 0 & ? & 0 & ? \\
     0 & ? & ? & 0 & ? & ? \\
  \end{array}\right)
\end{eqnarray*}
where the question marks are $0$ with probability $1-p$ and $1$ with probability $p$. It is the decoder's task to figure out which combinations make sense based on the outcome vector. For example, the following matrices and input vectors fit perfectly with $\vy$:
\begin{eqnarray*}
  \left(\begin{array}{c}  0 \\ 1 \\0 \end{array}\right)&=&\left(\begin{array}{c c c c c c}
     1& 0 & 0 & 0 & 1 & 0 \\
     0 & 1 & 0 & 1 & 0 & 1 \\
     0 & 1 & 0 & 0 & 1 & 1 \\
  \end{array}\right) \left(\begin{array}{c}  0 \\ 0 \\ 1 \\ 1 \\ 0 \\ 0\end{array}\right)\\
  &=&\left(\begin{array}{c c c c c c}
     1& 0 & 1 & 0 & 1 & 0 \\
     0 & 1 & 0 & 1 & 0 & 1 \\
     0 & 1 & 1 & 0 & 1 & 0 \\
  \end{array}\right) \left(\begin{array}{c}  0 \\ 0 \\ 0 \\ 1 \\ 0 \\ 1\end{array}\right).
\end{eqnarray*}\hspace{\textwidth}
\end{example}
\vspace{2mm}
More formally, the goal of our scenario is two-fold:
\begin{enumerate}
\item Designing the contact matrix $\cMc$ so that it allows unique reconstruction of sparse input $\vx$ from outcome $\vy$ with overwhelming probability $1-o(1)$ over the randomness of the sampling matrix $\cMs$.
\item Proposing a recovery algorithm with low computational complexity.
\end{enumerate}

We present a probabilistic approach for designing contact matrices suitable for our problem setting, along with a simple decoding algorithm for reconstruction. Our approach is to first introduce a rather different setting for the problem that involves no randomness in the way the defective items become active. Namely, in the new setting an adversary can arbitrarily decide whether a certain contact with a defective item results in a positive test result or not, and the only restriction on the adversary is on the total amount of inactive contacts being made. The reason for introducing the adversarial problem is its combinatorial nature that allows us to use standard tools and techniques already developed in combinatorial group testing. Fortunately, it turns out that by choosing a carefully-designed value for the total amount of inactive contacts based on the parameters of the system, solving the adversarial variation is sufficient for the original (stochastic) problem.

Our task is then to design contact matrices suitable for the adversarial problem. We give a probabilistic construction of the contact matrix in Section~\ref{chap6sec:probDesign}. The probabilistic construction requires each test to independently contact the items with a certain well-chosen probability. This construction ensures that the resulting data gathered at the end of the experiment can be used for correct identification of the defective items with overwhelming probability, provided that the number of tests is sufficiently large.  In our analysis, we consider two different design strategies

\begin{itemize}
  \item \textbf{Per-Instance Design}: The contact matrix is suitable for every arbitrary, but a priori fixed, sparse input vector with overwhelming probability.
  \item \textbf{Universal Design}: The contact matrix is suitable for all sparse input vectors with overwhelming probability.
\end{itemize}
Based on the above definitions, the contact matrix constructed for the per-instance scenario, once fixed, may fail to distinguish between all pairs of sparse input vectors. On the other hand, in the universal design, one can use a single contact matrix to successfully measure all sparse input vectors with a very high probability of success. Our results show that $M = O\left(K \log (N) / p^3\right)$ tests are sufficient for successful recovery in the per-instance scenario while we need $M = O\left(K^2 \log (N/K) / p^3\right)$ tests for the universal design.

\vspace{2mm} {\it Remark:} As is customary in the standard group testing literature, we think of the sparsity $K$ as a parameter that is noticeably smaller than the population size $N$, for example, one may take $K=O(N^{1/4})$.  Indeed, if $K$ becomes comparable to $N$, there would be little point in using a group testing scheme and in practice, for large $K$ it is generally more favorable to perform trivial tests on the items.
\end{section}

\begin{section}{Adversarial Setting}
\label{chap6:adversarial}
The problem described in Section~\ref{chap6:problemDefinition} has a stochastic nature, i.e., the sampling matrix is obtained from the contact matrix through a random process. In this section, we introduce an adversarial variation of the problem whose solution leads us to the solution for the original stochastic problem.

In the adversarial setting, the sampling matrix is obtained from the contact matrix by flipping up to $e$ arbitrary entries to $0$ on the support (i.e., the set of nonzero entries) of each column of $\cMc$.  The goal is to be able to exactly identify the sparse input vector despite the perturbation of the contact matrix and regardless of the choice of the flipped entries.  Note that the classical group testing problem corresponds to the special case $e=0$. Thus, the only difference between the adversarial problem and the stochastic one is that in the former, the flipped entries of the contact matrix are chosen arbitrarily (as long as there are not too many flips) while in the latter, they are chosen according to a specific random process.

It turns out that the combinatorial tool required for solving the adversarial problem is closely related to the notion of \emph{disjunct} matrices that is well studied in the group testing literature~\cite{DuH99}. The formal definition is as follows.

\begin{definition}
\label{chap6def:disjunct} A boolean matrix $\cM$ with $N$ columns $\cM_1, \ldots, \cM_{\!N}$ is called $(K,e)$-disjunct if, for every subset $S$ of the columns with $|S| \leq K$, and every $i \in [N]$, we have
\begin{equation*}
  \left|\supp(\cM_i) \setminus \left(\bigcup_{j \in S \setminus \{i\}} \supp(\cM_j) \right)\right| > e
\end{equation*}
where $\supp(\cM_i)$ denotes the support set of the column $\cM_i$ and $\setminus$ is the set difference operator. In words, this operation counts the number of nonzero positions in the column $\cM_i$ for which all columns with index in the set $S$ have zeros.
\end{definition}

Note that the special case of $(K, 0)$-disjunct matrices corresponds to the classical notion of $K$-disjunct matrices which is essentially equivalent to strongly selective families and superimposed codes (see \cite{ref:CMS01}). Moreover, when all columns of the matrix have the same Hamming weight $t$, a $(K+1, 2t/3)$-disjunct matrix turns out to be equivalent to $t$-majority $K$-strongly selective families that are defined in \cite{ref:Iwe08} (where each row of the matrix defines the characteristic vector of a set in the family). This notion is known to be useful for construction of non-adaptive compressed sensing schemes \cite{ref:Iwe08}.

The following proposition shows the relationship between contact matrices suitable for the adversarial problem and disjunct matrices.

\begin{proposition}
\label{chap6:disjunct}
Let $\cM$ be a $(K, e)$-disjunct matrix. Then taking $\cM$ as the contact matrix solves the adversarial problem for $K$-sparse vectors with error parameter $e$. Conversely, any matrix that solves the adversarial problem must be $(K-1, e)$-disjunct.
\end{proposition}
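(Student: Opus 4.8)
The plan is to prove both directions via an explicit ``count the negative tests'' decoding rule. For the forward direction, take $\cM = \cMc$ as the contact matrix and let the decoder declare item $i$ \emph{defective} if and only if item $i$ participates in at most $e$ negative tests, i.e.\ $\lvert\{\ell : \cMc_{\ell i} = 1 \text{ and } y_\ell = 0\}\rvert \le e$. I claim this rule recovers $S := \supp(\vx)$ exactly (hence $\vx$, since $\vx \in \F_2^N$ is determined by its support), for every $K$-sparse $\vx$ and every admissible choice of flips by the adversary. Two observations do the job. First, if $i \in S$: any negative test $\ell$ containing item $i$ has $\cMc_{\ell i}=1$ and $y_\ell = 0$, which forces $\cMs_{\ell i}=0$, so the $(\ell,i)$ entry was flipped; since column $i$ sustains at most $e$ flips, item $i$ lies in at most $e$ negative tests and the rule does not misclassify it. Second, if $i \notin S$: applying $(K,e)$-disjunctness to the set $S$ (note $\lvert S\rvert \le K$) and index $i$ yields more than $e$ tests $\ell$ in which item $i$ appears but no item of $S$ does; for each such $\ell$ we have $\cMs_{\ell j}=0$ for all $j\in S$ (flips only turn ones into zeros), hence $y_\ell = 0$, so item $i$ sits in more than $e$ negative tests and is correctly declared non-defective.

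For the converse I argue by contrapositive. Suppose $\cM$ is \emph{not} $(K-1,e)$-disjunct; I exhibit two distinct $K$-sparse inputs with admissible adversarial flip patterns producing the same outcome vector, so no decoder can be correct on both. By definition there is a set $S$ with $\lvert S\rvert \le K-1$ and an index $i$ with $\lvert\supp(\cM_i)\setminus\bigcup_{j\in S\setminus\{i\}}\supp(\cM_j)\rvert \le e$; replacing $S$ by $S\setminus\{i\}$ I may assume $i\notin S$ and $\lvert S\rvert\le K-1$. Put $\vx$ with $\supp(\vx)=S$ and $\vx'$ with $\supp(\vx')=S\cup\{i\}$ (both $K$-sparse and distinct). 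Let the adversary perform no flips for $\vx$, and for $\vx'$ flip to zero exactly the (at most $e$) entries of column $i$ that lie in $\supp(\cM_i)\setminus\bigcup_{j\in S}\supp(\cM_j)$, leaving every other column intact. Since only column $i$ is touched, the outcome of $\vx'$ at coordinate $\ell$ equals $y_\ell \vee \cMs_{\ell i}$, where $y_\ell = \bigvee_{j\in S}\cMc_{\ell j}$ is the outcome of $\vx$; and whenever $\cMs_{\ell i}=1$ we must have $\cMc_{\ell i}=1$ with $\ell\notin\supp(\cM_i)\setminus\bigcup_{j\in S}\supp(\cM_j)$, i.e.\ some $j\in S$ has $\cMc_{\ell j}=1$, forcing $y_\ell = 1$ already. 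Hence the two outcome vectors coincide and $\cM$ does not solve the adversarial problem.

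The routine parts are the boolean bookkeeping in the two observations of the forward direction and the coordinatewise agreement of the outcomes in the converse; I would spell these out, but they are mechanical. The one point requiring care is the off-by-one in the converse: the adversary effectively adds one extra ``defective'' coordinate $i$ on top of a support of size $\le K-1$, which is precisely why $(K-1,e)$-disjunctness (not $(K,e)$-disjunctness) is the right necessary condition and why one must first reduce to $i\notin S$ before forming $\vx'$. I would also note that degenerate columns (all-zero, or Hamming weight $\le e$) are handled uniformly by this argument, matching the intuition that an item appearing in at most $e$ tests can be completely hidden by the adversary.
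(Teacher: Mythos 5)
Your proof is correct. The converse is exactly the paper's argument: reduce to a violating pair $(S,i)$ with $i\notin S$, compare the inputs supported on $S$ and on $S\cup\{i\}$, and let the adversary erase the at most $e$ rows of column $i$ not covered by the columns of $S$; your coordinatewise verification that the two outcomes agree, and your explicit reduction to $i\notin S$ (needed because Definition~\ref{chap6def:disjunct} allows $i\in S$), are both sound. The forward direction is where you take a different route: the paper proves Proposition~\ref{chap6:disjunct} by a pure distinguishability argument (for any two distinct $K$-sparse supports $S,S'$ and any admissible flips, a row ``private'' to some $i\in S'\setminus S$ survives the at most $e$ flips in column $i$ and separates the two outcome vectors), and only afterwards, in Proposition~\ref{chap6:distanceDecoder}, shows that the distance decoder realizes this identifiability. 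Your ``at most $e$ negative tests'' rule is precisely the distance decoder criterion $|\supp(\vc_i)\setminus\supp(\vy)|\leq e$, and your two observations are the two halves of the paper's proof of Proposition~\ref{chap6:distanceDecoder}. In effect you have merged the two propositions and proved the stronger, algorithmic statement directly; nothing is lost, since exhibiting a correct decoder certainly witnesses solvability, and the underlying combinatorial facts (a defective column can lose at most $e$ of its tests, while a non-defective column has more than $e$ tests untouched by the true support, all of which necessarily come out negative) are identical in both write-ups. Your closing remark about low-weight columns is harmless but moot in the forward direction: applying the definition with $S=\{i\}$ shows every column of a $(K,e)$-disjunct matrix already has weight greater than $e$.
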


\begin{proof}
Let $\cM$ be a $(K, e)$-disjunct matrix and consider $K$-sparse vectors $\vx$ and $\vx'$ supported on different subsets $S$ and $S'$, respectively. Take an element $i \in S'$ which is not in $S$. By Definition~\ref{chap6def:disjunct}, we know that the column $\cM_i$ has more than $e$ entries on its support that are not present in the support of any $\cM_j, \:j \in S$.  Therefore, even after $e$ bit flips in $\cM_i$, at least one entry in its support remains that is not present in the test outcome of $\vx'$, and this makes $\vx$ and $\vx'$ distinguishable.

For the reverse direction, suppose that $\cM$ is not $(K-1,e)$-disjunct and take any $i$ and a subset $S$ with $|S| \leq K-1$, $i \notin S$ which demonstrates a counterexample for $\cM$ being $(K-1, e)$-disjunct.  Consider $K$-sparse vectors $\vx$ and $\vx'$ supported on $S$ and $S \cup \{i\}$, respectively. An adversary can flip up to $e$ bits on the support of $\cM_i$ from $1$ to $0$, leave the rest of $\cM$ unchanged, and ensure that the test outcomes for $\vx$ and $\vx'$ coincide. Thus $\cM$ is not suitable for the adversarial problem.
\end{proof}

\begin{subsection}{Distance Decoder}
Proposition~\ref{chap6:disjunct} shows that a $(K, e)$-disjunct contact matrix can \emph{combinatorially} distinguish between $K$-sparse vectors in the adversarial setting with error parameter $e$. In the following, we show that there exists a much simpler decoder for this purpose.

\noindent \textbf{\\Distance decoder:} For any column $\vc_i$ of the contact matrix $\cMc$, the decoder verifies the following:
\begin{equation}
\label{chap6:decoder}
  |\supp(\vc_i)\setminus\supp(\vy)|\leq e
\end{equation}
where $\vy$ is the vector consisting of the test outcomes. The coordinate $x_i$ is decided to be nonzero if and only if the inequality holds.

The above decoder is a straightforward generalization of a standard decoder that is used in classical group testing. The standard decoder chooses those columns of the measurement matrix whose supports are fully contained in the measurement outcomes (see \cite[Ch~7]{DuH99}).

\begin{proposition}
\label{chap6:distanceDecoder}
Let the contact matrix $\cMc$ be $(K, e)$-disjunct. Then, the distance decoder correctly identifies the correct support of any $K$-sparse vector in the adversarial setting with error parameter $e$.
\end{proposition}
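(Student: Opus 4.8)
The plan is to show two things: every column in the support of $\vx$ passes the distance decoder's test~\eqref{chap6:decoder}, and every column outside the support fails it. Let $S = \supp(\vx)$ with $|S| \le K$, and let $\cMs$ be the perturbed (sampling) matrix obtained from $\cMc$ by flipping at most $e$ ones to zero in each column. Write $\vy = \cMs \cdot \vx$, so that $\supp(\vy) = \bigcup_{j \in S} \supp(\cMs_j)$.

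First I would handle the columns $i \in S$ (true positives). Since $\supp(\cMs_i) \subseteq \supp(\cMc_i) = \supp(\vc_i)$ and $\cMs_i$ is obtained from $\vc_i$ by at most $e$ flips, we have $|\supp(\vc_i) \setminus \supp(\cMs_i)| \le e$. But $\supp(\cMs_i) \subseteq \supp(\vy)$ because $i \in S$, hence $\supp(\vc_i) \setminus \supp(\vy) \subseteq \supp(\vc_i) \setminus \supp(\cMs_i)$, which gives $|\supp(\vc_i) \setminus \supp(\vy)| \le e$. So the inequality holds and $x_i$ is correctly declared nonzero.

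Next I would handle the columns $i \notin S$ (true negatives), which is where the disjunctness hypothesis does the work. Because $\cMc$ is $(K,e)$-disjunct and $|S| \le K$, Definition~\ref{chap6def:disjunct} applied to the set $S$ (note $i \notin S$, so $S \setminus \{i\} = S$) gives
\begin{equation*}
\left| \supp(\vc_i) \setminus \bigcup_{j \in S} \supp(\vc_j) \right| > e.
\end{equation*}
Now $\supp(\vy) = \bigcup_{j \in S} \supp(\cMs_j) \subseteq \bigcup_{j \in S} \supp(\vc_j)$, so $\supp(\vc_i) \setminus \bigcup_{j \in S} \supp(\vc_j) \subseteq \supp(\vc_i) \setminus \supp(\vy)$, and therefore $|\supp(\vc_i) \setminus \supp(\vy)| > e$. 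The inequality~\eqref{chap6:decoder} fails, so $x_i$ is correctly declared zero. Combining the two cases, the distance decoder recovers $S$ exactly, regardless of which entries the adversary flipped.

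The argument is essentially a bookkeeping exercise with set inclusions, so I do not anticipate a genuine obstacle; the one point requiring care is the direction of the inclusions — making sure that passing from $\cMs$ to $\cMc$ (an \emph{enlargement} of each column's support) goes the right way in each of the two cases, and that the disjunctness bound $>e$ survives the perturbation. It is worth remarking that this also yields, via Proposition~\ref{chap6:disjunct}, an alternative proof that the matrix distinguishes $K$-sparse vectors, but the direct decoder argument above is cleaner and gives the algorithmic statement immediately, with running time $O(MN)$ since each of the $N$ columns is compared against $\vy$ in time $O(M)$.
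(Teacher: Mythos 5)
Your proof is correct and follows essentially the same route as the paper's: true positives pass the test because each column of $\cMc$ is perturbed in at most $e$ positions, and true negatives fail because $\supp(\vy)$ is contained in the union of the \emph{unperturbed} supports $\bigcup_{j\in S}\supp(\vc_j)$ (the paper's $\hat{\vy}$), so the $(K,e)$-disjunctness bound $>e$ carries over. The only cosmetic difference is that the paper phrases the second case as a contradiction while you argue it directly.
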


\begin{proof}
Let $\vx$ be a $K$-sparse vector and $S := \supp(\vx)$, $|S| \leq K$, and $\cMc_S$ denote the corresponding set of columns in the sampling matrix.  Obviously, all the columns in $\cMc_S$ satisfy~\eqref{chap6:decoder} (as no column is perturbed in more than $e$ positions) and thus the reconstruction includes the support of $\vx$ (this is true regardless of the disjunctness property of $\cMc$). Now let the vector $\hat{\vy}$ be the bitwise OR of the columns in $\cMc_S$ and assume that there is a column $\vc$ of $\cMc$ outside $S$ that satisfies~\eqref{chap6:decoder}. Thus, since
\begin{equation*}
\supp(\vy) \subseteq \supp(\hat{\vy})\,,
\end{equation*}
we will have
\begin{equation*}
|\supp(\vc)\setminus\supp(\hat{\vy})|\leq e
\end{equation*}
which violates the assumption that $\cMc$ is $(K, e)$-disjunct for the support set $S$ and the column $\vc$ outside this set. Therefore, the distance decoder outputs the exact support of $\vx$.
\end{proof}

\end{subsection}

Of course, posing the adversarial problem is interesting if it helps in solving the original stochastic problem from which it originates. In the next section,  we show that this is indeed the case; and in fact the task of solving the stochastic problem reduces to that of the adversarial problem.

\end{section}

\begin{section}{Probabilistic Design}
\label{chap6sec:probDesign}
In this section, we consider a probabilistic construction for $\cMc$, where each entry of $\cMc$ is set to $1$ independently with probability $q = \alpha/K$, for a parameter $\alpha$ to be determined later, and $0$ with probability $1-q$. We will use standard arguments to show that, if the number of tests $M$ is sufficiently large, then the resulting matrix $\cMc$ is suitable with all but a vanishing probability.

By looking carefully at the proof of Proposition~\ref{chap6:distanceDecoder}, we see that there are two events that \emph{may prevent} the distance decoder with error parameter $e$ to successfully recover the input vector $\vx$ with support on $S$:
\begin{enumerate}
  \item There are more than $e$ flips on the columns of the contact matrix in $S$.
  \item There exists a column outside $S$ where the $(K,e)$-disjunct property is violated.
\end{enumerate}

Based on these observations, the number of tests required for building suitable contact matrices are given by the following theorem.

\begin{theorem}
\label{chap6:findM}
Consider $M \times N$ contact matrices $\cMc$ constructed by the probabilistic design procedure. If $M = O\left(K \log (N) / p^3\right)$ for the per-instance scenario or $M = O\left(K^2 \log (N/K) / p^3\right)$ for the universal scenario, then the probability of failure for the reconstruction with the distance decoder goes to zero as $N \rightarrow \infty$.
\end{theorem}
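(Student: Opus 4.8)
The plan is to fix the activation probability $p$ and the sparsity $K$, set the contact probability to $q = \alpha/K$ for a constant $\alpha$ (to be chosen, roughly $\alpha \asymp 1$), and choose the error parameter of the distance decoder to be $e := \lfloor cpt \rfloor$ where $t \approx qM$ is the typical Hamming weight of a column; concretely one wants $e$ to be a constant fraction (say one half) of the expected number of \emph{surviving} ones in a column, i.e.\ $e \approx ptM/2$ with $t = q$. The reduction step (promised at the end of Section~\ref{chap6:adversarial}, and which I will make precise) says that if the stochastic flips on a fixed column number at most $e$ and the matrix is $(K,e)$-disjunct outside the relevant support, then the distance decoder of Proposition~\ref{chap6:distanceDecoder} succeeds. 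So the whole theorem comes down to two tail bounds corresponding to the two bad events listed just before the statement.

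For the first bad event, fix a column $j \in S$. The number of flips on it is $\mathrm{Bin}(w_j, 1-p)$ where $w_j = |\supp(\cMc_j)|$ is itself $\mathrm{Bin}(M,q)$. A standard Chernoff argument gives $w_j$ concentrated around $qM$, and then a second Chernoff bound shows the number of flips is at most $e = \Theta(pqM)$ except with probability $\exp(-\Omega(pqM))$; choosing $M = \Omega(K\log N / p^{?})$ makes this $o(1/K)$ (per-instance, union bound over the $\le K$ columns of $S$) or $o(1/\binom{N}{K})$ (universal, union bound over all supports). For the second bad event, I need the probabilistic matrix to satisfy the $(K,e)$-disjunctness requirement: for a fixed $i \notin S$, $|\supp(\cMc_i) \setminus \bigcup_{j\in S}\supp(\cMc_j)| > e$. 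For a single row, the probability that it contributes to this private set is $q(1-q)^{|S|} \ge q(1-q)^K \approx q e^{-\alpha}$, a constant times $q$; summing over $M$ rows the expected size of the private support is $\Theta(qM)$, which must exceed $e = \Theta(pqM)$ — this forces the constant in $e$ to be chosen below $e^{-\alpha}$ times the relevant constant, and is exactly where the factor $p$ (and ultimately the $1/p^3$) enters: $e$ must simultaneously be large enough to absorb the flips (event 1, needs $e \gtrsim pqM$) and small enough to be beaten by the private support (event 2, needs $e \lesssim qM$), so a valid window for $e$ exists and its width scales with $p$. A Chernoff bound then gives failure probability $\exp(-\Omega(qM))$ for each $i\notin S$, and a union bound over $i$ (and over supports $S$ in the universal case) closes it.

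Assembling the pieces: in the per-instance case the dominant requirement is $\exp(-\Omega(pqM)) \cdot (K + N) = o(1)$, i.e.\ $pqM = \Omega(\log N)$, giving $M = \Omega(K\log N / (pq \cdot \text{const}))$; tracking the constants carefully — the flip bound needs slack $\propto p$, the disjunctness bound needs the row-success probability which is $\propto p$ after the window for $e$ is fixed, and one more factor of $p$ comes from relating $e$ to $pqM$ versus the Chernoff exponent $qM$ — yields $M = O(K\log N/p^3)$. In the universal case the union bound runs over $\binom{N}{K} \le (eN/K)^K$ supports, contributing a $K\log(N/K)$ factor, and over the $N$ outside columns, so $M = O(K^2\log(N/K)/p^3)$. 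I would present this by first stating and proving a clean lemma: ``with $q=\alpha/K$ and $e = \beta p q M$ for suitable constants $\alpha,\beta$, a random $\cMc$ is $(K,e)$-disjunct except with probability $N\exp(-\gamma p q M)$, and simultaneously has $\le e$ flips on any fixed column except with probability $\exp(-\gamma' pqM)$,'' and then deriving both bounds on $M$ as corollaries via the appropriate union bound. The main obstacle is the bookkeeping in the previous paragraph: one must choose the three constants $\alpha$ (contact rate), $\beta$ (fraction defining $e$), and the Chernoff parameters so that the window $pqM \lesssim e \lesssim qM$ is nonempty \emph{and} both tail exponents are $\Omega(pqM)$ rather than merely $\Omega(p^2 qM)$ or worse — a naive choice of $e$ loses extra powers of $p$ and would give $1/p^4$ or $1/p^5$, so the delicate point is optimizing the placement of $e$ within its feasible window to land exactly at $1/p^3$.
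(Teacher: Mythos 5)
Your overall architecture matches the paper's: reduce to the adversarial/disjunctness formulation, isolate the two bad events (too many flips on a support column; violation of $(K,e)$-disjunctness for an outside column), control each by a Chernoff bound, and let the per-instance and universal cases differ only in the union bound ($N$ versus $N\binom{N}{K}$). However, there is a genuine quantitative error in where you place the threshold $e$, and it breaks the argument precisely in the regime where the $1/p^3$ dependence matters. For a column $i\in S$, the quantity $|\supp(\cMc_i)\setminus\supp(\vy)|$ is the number of \emph{flipped} entries of that column, which is $\mathrm{Bin}(w_i,1-p)$ with mean $\approx(1-p)qM$; so event 1 requires $e\gtrsim(1-p)qM$, not $e\gtrsim pqM$ as you state. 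Your concrete choice $e\approx pqM/2$ (half the expected number of \emph{surviving} ones) sits below the mean number of flips whenever $p<2/3$, so for such $p$ the support columns fail the test~\eqref{chap6:decoder} with probability tending to one. The correct feasible window is $(1-p)qM\lesssim e\lesssim q(1-q)^K M$, and its nonemptiness forces $(1-q)^K>1-p$, i.e.\ $\alpha\lesssim\ln\frac{1}{1-p}=\Theta(p)$ for small $p$ --- so your choice $\alpha\asymp 1$ is also inadmissible except when $p$ is bounded above $1-e^{-\alpha}$. The paper takes $\alpha=p/8$, $\delta=p/2$, and $e=(1+\delta)(1-p)qM$ (just above the mean flip count), and the three factors of $p$ then arise transparently: one from the density $q=\Theta(p/K)$ and two from the squared gap $\bigl((1-q)^K-(1-p)(1+\delta)\bigr)^2=\Theta(p^2)$ in the Chernoff exponent $\frac{(\mu-e)^2}{2\mu}$. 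Your bookkeeping paragraph gestures at three factors of $p$ but attributes them to the wrong sources, and with your stated $e$ and $\alpha$ the exponents do not come out as claimed.

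To repair the proposal you would need to (i) move $e$ into the interval $\bigl((1-p)qM,\;q(1-q)^KM\bigr)$, e.g.\ $e=(1+\delta)(1-p)qM$, (ii) scale $\alpha$ (and $\delta$) linearly in $p$ so that this interval is nonempty for all $p\in(0,1]$, and (iii) recompute the disjunctness failure exponent as $\Omega(qM\,p^2)=\Omega(Mp^3/K)$. With those corrections your two union bounds deliver the stated $M=O(K\log N/p^3)$ and $M=O(K^2\log(N/K)/p^3)$ exactly as in the paper.
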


\begin{proof}
Let $e$ be the decision-making parameter of the distance decoder. We first find an upperbound for the number of bit flips in any column of the contact matrix $\cMc$. To this end, take any column $\cMc_i$ of $\cMc$. Each entry of the column $\cMc_i$ is flipped independently with probability $(1-p)q$ which, on average, results in $(1-p)qM$ bit flips per column. Let $e = (1+\delta)(1-p)\,qM$ for a constant $\delta > 0$. By Chernoff bounds (cf.\ \cite{ref:MU}), the probability that the amount of bit flips exceeds $e$ is at most
\begin{equation}
\label{cha6equ:bitflipupperbound}
  \exp\left(-\delta^2 (1-p)qM/(2+\delta)\right)\,.
\end{equation}

Second, we check the disjunctness property of the contact matrix for this parameter $e$. To this end, consider any set $S$ of $K$ columns of $\cMc$, and any column outside these, say the $i$th column where $i \notin S$. First we upper bound the probability of \emph{failure} for this choice of $S$ and $i$. That is, the probability that the number of rows that have a $1$ at the $i$th column and all-zeros at the positions corresponding to $S$ is at most $e$. Clearly if this event happens the $(K, e)$-disjunct property is violated.

A row is \emph{good} if at that row the $i$th column has a $1$ but all the columns in $S$ have zeros. For a particular row, the probability that the row is good is $q(1-q)^K$ (using independence of the entries of the measurement matrix). Then failure corresponds to the event that the number of good rows is at most $e$. The distribution on the number of good rows is binomial with mean $\mu = q(1-q)^K M$. Using the Chernoff bound and assuming that $e < \mu$ (we will choose $\alpha$ and $\delta$ to ensure this condition is satisfied), we have\footnote{The failure probability is at least $0.5$ if $e \geq \mu\,.$}
\begin{eqnarray*}
  \text{failure probability} &\leq& \exp\left( \frac{-(\mu-e)^2}{2\mu}\right) %\nonumber\\
\end{eqnarray*}
\begin{eqnarray}
\label{chap6equ:failureProb2}
  &=& \exp\left(\frac{-M q \left((1-q)^K - (1-p)(1+\delta)\right)^2}{2(1-q)^K}\right) \\
  &\overset{(a)}{\leq}& 
  %\exp\left(-M q \gamma\right)\nonumber
  \exp\left(\frac{-M q \left((1-2\alpha) - (1-p)(1+\delta)\right)^2}{2(1-\alpha/2)}\right) \nonumber \\
  &\leq & \exp\left(-\frac{1}{2} M q \left((1-2\alpha) - (1-p)(1+\delta)\right)^2\right) \nonumber \\
  &=& \exp(-M \gamma / K) \label{eqn:MgammaK},
\end{eqnarray}
where we have defined
\[
\gamma := \frac{\alpha}{2} \left((1-2\alpha) - (1-p)(1+\delta)\right)^2
\]
%in which
%\begin{equation}
%\label{chap6equ:gamma}
%  \gamma = 2^{\alpha-1}\times\text{min}\{(2^{-\alpha} - (1-p)(1+\delta))^2, (3^{-\alpha} - (1-p)(1+\delta))^2\}\,.
%\end{equation}
The inequality $(a)$ is due to the fact that $(1-q)^K = (1-\alpha/K)^K$ is always between $3^{-\alpha}$ and $2^{-\alpha}$, and in particular for $\alpha \in [0,1]$, this range is strictly contained in $[1-2\alpha, 1-\alpha/2]$. Note that by choosing the parameters $\alpha$ and $\delta$ sufficiently small, the quantity 
\[
\left((1-2\alpha) - (1-p)(1+\delta)\right)^2
\]
 in the exponent can be made arbitrarily close to $p^2$. As a concrete choice, however, we take $\delta := p/2$ and $\alpha := p/8$ which gives \[\gamma = \frac{p}{16} \left( \frac{p}{4} + \frac{p^3}{4}\right)^2 = \Omega(p^3)\]
 and, therefore,
\begin{equation} \label{eqn:failure}
\text{failure probability} \leq 2^{-\Omega(Mp^3/K)}.
\end{equation}
In order to calculate the number of tests, we consider per-instance and universal scenarios separately.
\begin{itemize}
  \item \emph{Per-Instance Scenario}\\
  For the per-instance scenario, the disjunctness property needs to hold only for a fixed set $S$, corresponding to the support of the fixed sparse vector that defines the instance. Therefore, we only need to apply the union bound over all possible choices of $i$ for a fixed set $S$. From \eqref{eqn:failure}, the probability of coming up with a bad choice of $\cMc$ would thus be at most \[ N 2^{-\Omega(Mp^3/K)}. \] This probability vanishes for an appropriate choice of 
\begin{equation}
  M = \Theta\left( \frac{K \log N}{p^3} \right). \label{eqn:choiceOfM}
\end{equation}
At the same time, using~\eqref{cha6equ:bitflipupperbound} and the union bound, the probability that the amount of bit flips in any of the $K$ columns in $S$ exceeds $e$ is upper bounded by
\begin{multline*}
  K \exp\left(-\delta^2 (1-p)qM/(2+\delta)\right) \\= K \exp\left(-\Omega\left(\frac{\alpha \delta^2 (1-p)\log (N)}{(2+\delta)p^3}\right)\right),
\end{multline*}
which is vanishing (i.e., $o(1)$) assuming the constant behind the $\Theta(\cdot)$ notion in \eqref{eqn:choiceOfM} is sufficiently large.
Therefore, the distance decoder successfully decodes the input vector $\vx$ with probability $1-o(1)$ in the per-instance scenario with $M = O\left(K \log (N) / p^3\right)$ tests.

\item \emph{Universal Scenario}\\
In this case, we apply the union bound over all possible choices of $S$ and $i$. Using \eqref{eqn:MgammaK}, the probability of coming up with a bad choice of $\cMc$ is at most $N \binom{N}{K} \exp\left(-M \gamma/K\right)\,$. This probability vanishes for an appropriate choice of
\begin{equation*}
  M = \Theta \left( \frac{K^2 \log (N/K)}{\gamma} \right) = \Theta\left( \frac{K^2 \log (N/K)}{p^3} \right).
\end{equation*}
At the same time, using~\eqref{cha6equ:bitflipupperbound} and the union bound, the probability that the amount of bit flips in any of the $N$ columns of the contact matrix exceeds $e$ is upper bounded by
\begin{multline*}
  N \exp\left(-\delta^2 (1-p)qM/(2+\delta)\right) \\= N \exp\left(-\frac{\alpha \delta^2 (1-p)K\log (N/K)}{(2+\delta)\gamma}\right) 
  = o(1).
\end{multline*}
Therefore, with $M = O\left(K^2 \log (N/K) / p^3\right)$ tests, the probabilistic design constructs a contact matrix such that the distance decoder is able to decode all sparse input vectors $\vx$ with probability $1-o(1)$.

\end{itemize}
\end{proof}

The probabilistic construction results in a rather sparse contact matrix, namely, one with density $O(1/K)$ that decays with the sparsity parameter $K$. In the following, we show that sparsity is necessary for the probabilistic construction to work.

\begin{proposition}
Let $\cM$ be an $M \times N$ boolean random matrix, where $M = O\left(K^2 \log (N/K)\right)$ or $M = O\left(K \log (N)\right)$ for an integer $K > 0$, which is constructed by setting each entry independently to $1$ with probability $q$. Then either $q = O\left(\log K/K\right)$ or otherwise the probability that $\cM$ is $(K,e)$-disjunct (for any $e \geq 0$) approaches to zero as $N$ grows.
\end{proposition}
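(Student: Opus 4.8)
The plan is to prove the (formally stronger) statement that $\cM$ fails to be $(K,0)$-disjunct with probability $1-o(1)$; since a $(K,e)$-disjunct matrix is a fortiori $(K,0)$-disjunct, this gives the dichotomy for every $e\ge 0$. It suffices to show $\Pr[\cM\text{ is }(K,0)\text{-disjunct}]\to0$ under the assumption $q\ge 2(\log K)/K$, which we may also take to force $K\to\infty$ (this is the substantive horn of the dichotomy: for bounded $K$, $\log K/K=\Omega(1)$ and the first alternative holds); assume moreover $q<1$ and $K\le N/2$. Write $u:=1-q$ and $L:=\log(1/u)$. The certificate of failure I would use is cheap: fix $S:=\{1,\dots,K\}$, put $U:=\bigcup_{j\in S}\supp(\cM_j)\subseteq[M]$ and $U^c:=[M]\setminus U$; if some column $\cM_i$ with $i>K$ satisfies $\supp(\cM_i)\subseteq U$ then $\supp(\cM_i)\setminus\bigcup_{j\in S}\supp(\cM_j)=\emptyset$, which contradicts Definition~\ref{chap6def:disjunct} for this $S$ and $i$ already at $e=0$. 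So it is enough to prove that such an $i$ exists with probability $1-o(1)$.

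Next I would condition on the columns in $S$ (equivalently, on $U$). The columns $\cM_{K+1},\dots,\cM_N$ are then still i.i.d., each with $\Pr[\supp(\cM_i)\subseteq U\mid U]=(1-q)^{|U^c|}$ independently of the others, so
\begin{equation*}
\Pr[\cM\text{ is }(K,0)\text{-disjunct}]\le\mathbb{E}_U[(1-(1-q)^{|U^c|})^{N-K}]\le\mathbb{E}_U[\exp(-(N-K)(1-q)^{|U^c|})].
\end{equation*}
Since $|U^c|\sim\mathrm{Bin}(M,(1-q)^K)$ has mean $\mu:=M(1-q)^K$, the crux is to show $\mu L=o(\log N)$, equivalently $\mu=o(B_0)$ with $B_0:=\log(N-K)/L$.

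For this key estimate, note $\mu L=M\,u^K\log(1/u)$. A one-variable calculation shows $u\mapsto u^K\log(1/u)$ is increasing on $(0,e^{-1/K})$; since $q\ge 2(\log K)/K$ forces $u\le 1-2(\log K)/K$, which for $K\ge2$ lies below $1-1/K<e^{-1/K}$, the supremum over the admissible $u$ is at $u=1-2(\log K)/K$, where the value is at most $(1-2(\log K)/K)^K\cdot\log\frac1{1-2(\log K)/K}\le K^{-2}\cdot\frac{4\log K}{K}$. Hence $\mu L\le 4M K^{-3}\log K$, and this is $o(\log N)$ once either $M=O(K^2\log(N/K))$ or $M=O(K\log N)$ is substituted; as $\log(N-K)=\Theta(\log N)$, this is precisely $\mu=o(B_0)$. (When $\mu\to0$ the argument still goes through: then $|U^c|=0$ with high probability and every column outside $S$ is trivially contained in $U=[M]$.)

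Finally I would conclude by a first-moment argument. By Markov, $\Pr[|U^c|>B_0/2]\le 2\mu/B_0\to0$, while on the complementary event, since $0<1-q<1$,
\begin{equation*}
(N-K)(1-q)^{|U^c|}\ge(N-K)(1-q)^{B_0/2}=(N-K)\,e^{-LB_0/2}=(N-K)^{1/2},
\end{equation*}
which tends to $\infty$ (as $N-K\ge N/2$), so there $\exp(-(N-K)(1-q)^{|U^c|})\le\exp(-(N-K)^{1/2})$. Splitting the expectation from the second paragraph over these two events yields
\begin{equation*}
\Pr[\cM\text{ is }(K,0)\text{-disjunct}]\le 2\mu/B_0+\exp(-(N-K)^{1/2})\to 0 .
\end{equation*}
The hard part — and the only place the hypotheses on $M$ and on $q$ actually enter — is the estimate $\mu L=o(\log N)$; the delicacy is that it must hold \emph{uniformly} in $q$, in particular when $q$ is near $1$, where $L=\log\frac1{1-q}$ is large while $(1-q)^K$ is correspondingly minuscule, so the two effects have to be shown to cancel. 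The monotonicity of $u\mapsto u^K\log(1/u)$ together with the hypothesis $q\gg\log K/K$ is exactly what pins this trade-off down; once it is in hand, the remainder is the elementary conditioning above plus Markov's inequality, and no sharp concentration of $|U^c|$ is required.
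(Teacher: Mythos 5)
Your proof is correct, and it takes a genuinely different route from the paper's. The paper argues by necessity: it observes that deleting $t$ columns together with all rows in their support must leave a $(K-t,e)$-disjunct matrix, applies a Chernoff bound to the number $M_0$ of surviving rows, and invokes the fact that even a $(1,0)$-disjunct matrix needs $\log N$ rows to force $(1+\delta)(1-q)^{K-1}M\ge\log N$, hence $q\le\log\bigl(M(1+\delta)/\log N\bigr)/(K-1)=O(\log K/K)$. You instead exhibit a violating pair directly: fixing one $K$-set $S$ and conditioning on the set $U^c$ of rows not covered by $\bigcup_{j\in S}\supp(\cM_j)$, you show that some column outside $S$ lands entirely inside the union with probability $1-o(1)$, which kills even $(K,0)$-disjunctness. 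Your version is self-contained (no appeal to row lower bounds for disjunct matrices), needs only Markov rather than Chernoff, and isolates exactly where the hypotheses enter, namely the estimate $Mu^K\log(1/u)=o(\log N)$ via the monotonicity of $u\mapsto u^K\log(1/u)$ on $(0,e^{-1/K})$. That last point is a real gain: it makes the argument uniform in $q$, including the regime $q\to1$, where the paper's Chernoff step (whose bound is $o(1)$ only when $(1-q)^tM\to\infty$) is delicate. What the paper's route buys in exchange is a slightly more quantitative conclusion --- an explicit upper bound on $q$ in terms of $M$, $N$, $K$ rather than the fixed threshold $2\log K/K$. Both arguments degenerate at $K=1$ (where $\log K/K=0$ and the paper divides by $K-1$), but that is an artifact of the statement; your reduction of the bounded-$K$ case to the first horn of the dichotomy is otherwise fine.
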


\begin{proof}
Suppose that $\cM$ is an $M \times N$ matrix that is $(K,e)$-disjunct. Observe that, for any integer $t \in (0,K)$, if we remove any $t$ columns of $\cM$ and all the rows on the support of those columns, the matrix must remain $(K-t, e)$-disjunct. This is because any counterexample for the modified matrix being $(K-t, e)$-disjunct can be extended to a counterexample for $\cM$ being $(K,e)$-disjunct by adding back the removed columns and rows.

Now consider any $t$ columns of $\cM$, and denote by $M_0$ the number of rows of $\cM$ at which the entries corresponding to the chosen columns are all zeros. The expected value of $M_0$ is $(1-q)^t M$. Moreover, for every $\delta \in (0,1\!)$ we have
\begin{equation}
\label{chap6equ:chernoffDisj}
  \Pr\left[M_0 > (1+\delta) (1-q)^t M\right] \leq \exp\left( -\frac{\delta^2}{3} (1-q)^t M\right)
\end{equation}
by the Chernoff bound. Let $t_0$ be the largest integer for which
\begin{equation*}
(1+\delta) (1-q)^{t_0} M \geq \log N\,.
\end{equation*}
If $t_0 < K-1$, we let $t := t_0 + 1$ above, and this makes the right hand side of~\eqref{chap6equ:chernoffDisj} upper bounded by $o(1)$. So with probability $1-o(1)$, the chosen $t$ columns of $\cM$ will keep $M_0$ at most $(1+\delta)(1-q)^t M$. Removing those columns and all the rows on the support of these columns leaves the matrix $(K-t, e)$-disjunct, which obviously requires at least $\log N$ rows as even a $(1, 0)$-disjunct matrix needs so many rows. Therefore, we must have
\begin{equation*}
  (1+\delta)(1-q)^t M \geq \log N\,.
\end{equation*}
However, this inequality is not satisfied by the assumption on $t_0$.  So if $t_0 < K-1$, little chance remains for $\cM$ to be $(K,e)$-disjunct for any $e \geq 0$. Therefore, we should have $t_0 \geq K-1$. Using the condition on $t_0$, we have
\begin{equation*}
  (1+\delta)(1-q)^{K-1} M \geq \log N\,.
\end{equation*}
This is equivalent to
\begin{equation*}
  q \leq \frac{\log\left(M(1+\delta)/\log N\right)}{K-1}
\end{equation*}
which for $M = O\left(K^2 \log (N/K)\right)$ or $M = O\left(K \log (N)\right)$ gives $q = O\left(\log K/K\right)$\,.
\end{proof}
In summary, our results indicate that for both the per-instance and universal settings, the activation probability $p$ increases the upper bound on the number of tests by a factor of $1/p^3$. Moreover, we can use the simple distance decoder to recover the unknown input vector with the complexity of $O(MN)$. However, in order for the probabilistic design to work, we should choose a flip probability $q$ such that $q = O\left(\log K/K\right)$. In fact, our choice of $q = \alpha/K$ for a constant $\alpha$ satisfies this requirement.

\end{section}

\begin{section}{System Design and Simulation Results}
\label{chap6:design_simulation}
In this section, we provide a systematic design procedure which gives us the number of tests necessary for the decoding process to be successful. While the design procedure applies to both per-instance and universal scenarios, the numerical simulation result is provided only for the per-instance setting, since evaluating the universal design requires to test all possible inputs which is computationally prohibitive.

According to the discussion in Section~\ref{chap6sec:probDesign}, there are two types of failure events which we want to avoid in designing the contact matrix $\cMc$. The first failure event, denoted as $f_1$, happens when the number of bit flips in a column is not tolerable by the contact matrix and the second event, denoted as $f_2$, relates to the violation of the disjunctness property of the matrix. The inputs to the design procedure are $N$, $K$, $p$, $p_{f_1}^{ }$ and $p_{f_2}^{ }$, where the last two parameters denote the maximum tolerable probability for the first and second failure events, respectively. Then, the design procedure should provide us with the quantities $M$, $q$ and $e$, which are the required parameters to setup the sensing and recovery algorithms.

\noindent Let us summarize the results of the probabilistic design of Section~\ref{chap6sec:probDesign}. First we define $\eta$ from \eqref{chap6equ:failureProb2} as
\begin{equation}
\label{chap6equ:eta}
  \eta = q\frac{\left((1-q)^K - (1-p)(1+\delta)\right)^2}{2(1-q)^K}.
\end{equation}
Then, for the per-instance scenario (which is denoted by (i)), we have
\begin{multline*}
  p_{f_1}^{\text{(i)}} \leq 1-\\ \left[1-\exp\left(-(1-p)\,q\,M^{\text{(i)}}(\log(1+\delta)^{(1+\delta)}-\delta)\right)\right]^K, \\
\end{multline*}
\begin{align*}
  p_{f_2}^{\text{(i)}} &\leq N \exp\left(-M^{\text{(i)}}\eta\right),\\
  e^{\text{(i)}} &= (1+\delta)(1-p)\,qM^{\text{(i)}}.
\end{align*}
For the universal strategy (which is denoted by (u)), we have
\begin{multline*}
  p_{f_1}^{\text{(u)}} \leq 1-\\ \left[1-\exp\left(-(1-p)\,q\,M^{\text{(u)}}(\log(1+\delta)^{(1+\delta)}-\delta)\right)\right]^N,
\end{multline*}
\begin{align*}
  p_{f_2}^{\text{(u)}} &\leq N \binom{N}{K} \exp\left(-M^{\text{(u)}}\eta\right),\\
  e^{\text{(u)}} &= (1+\delta)(1-p)\,qM^{\text{(u)}}.
\end{align*}
Note that since the first failure event happens independently on the columns, we have used a more precise expression for this failure probability which does not use the union bound, and also makes use of the exact expression for the Chernoff bound.

Let us provide the details of the design for the per-instance scenario; The universal design follows the same lines. For any fixed value of $\alpha$, $\delta > 0$ is the only parameter which should be determined such that the failure probabilities fall below the maximum tolerable values. To this end, we initialize the value of $\delta$ to zero and increase it in small steps up to the value $\delta_{\text{max}}$. Given that $e = (1-p)(1+\delta)\,qM^{\text{(i)}}$  and $\mu = q(1-q)^K M^{\text{(i)}}$ and under the condition that $e < \mu$, we have
\begin{equation*}
  \delta_{\text{max}} = (1-q)^K/(1-p)-1\,.
\end{equation*}

For any value of $\delta < \delta_{\text{max}}$ and given the maximum tolerable probability for the second failure event $p_{f_2}^{ }$, the number of tests are computed as
\begin{equation*}
  M^{\text{(i)}} = \eta^{-1} \log{(N/p_{f_2}^{ })}
\end{equation*}
where $\eta$ is defined in~\eqref{chap6equ:eta}. This is then used to compute the corresponding probability for the first failure event as
\begin{multline*}
  p_{f_1}^{\text{(i)}} = 1-\\ \left[1-\exp\left(-(1-p)\,q\,M^{\text{(i)}}(\log(1+\delta)^{(1+\delta)}-\delta)\right)\right]^K.
\end{multline*}
We continue increasing $\delta$ until $p_{f_1}^{\text{(i)}}$ falls below the maximum tolerable probability for the first failure event $p_{f_1}^{ }$. This provides us with the number of tests $M^{\text{(i)}}$ and the error parameter $e^{\text{(i)}}$ for the chosen value of $\alpha$. This whole process is continued for different values of $\alpha$ in the range $[0, \alpha_{\text{max}}]$. At the end, we find the value of $\alpha$ which results in the minimum number of tests for the given value of $p$. This provides us with the number of tests and the decision parameter for the distance decoder.

\begin{figure*}[t]
\centering
\subfigure[Per-Instance Strategy]{
\includegraphics[width=.48\textwidth]{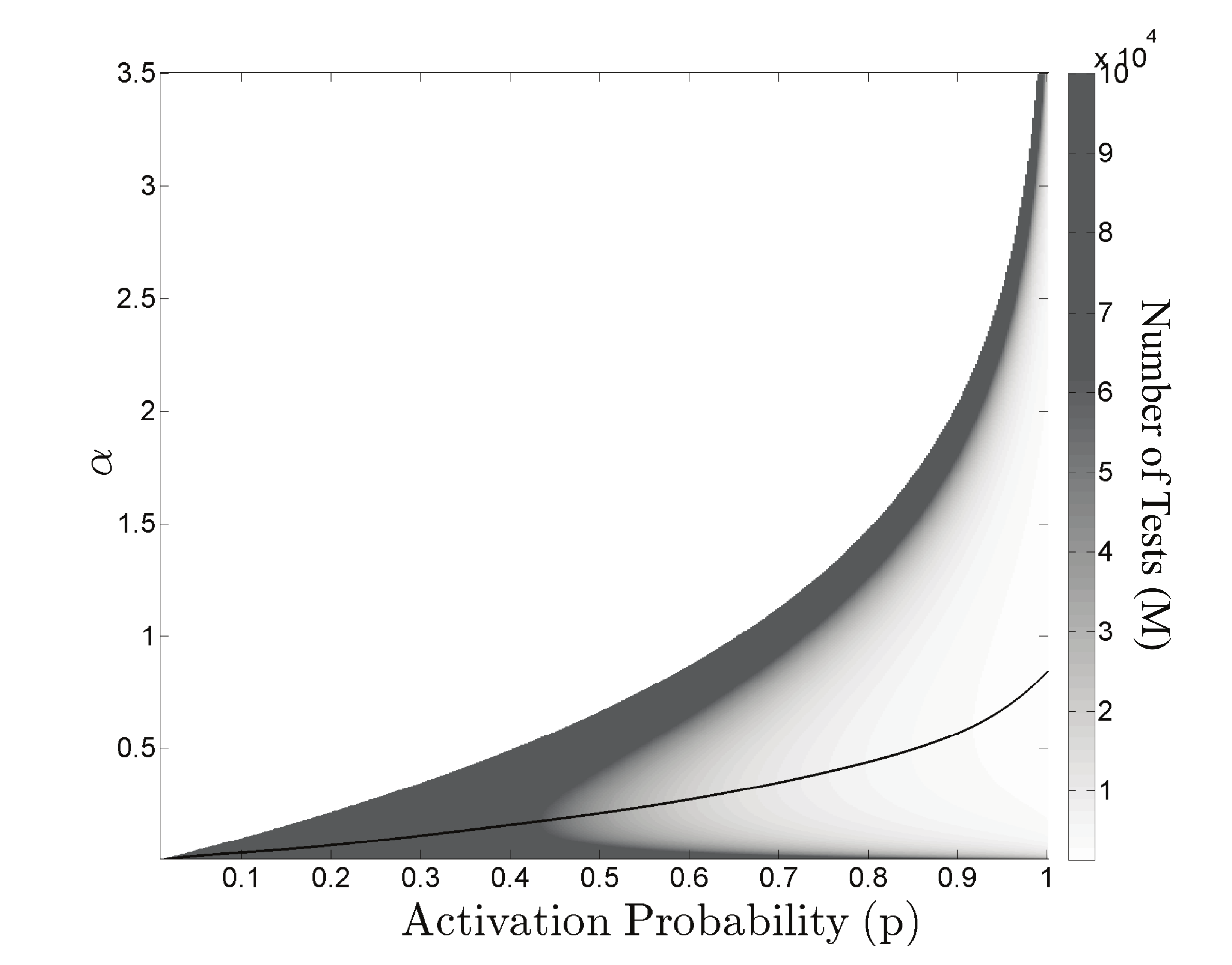}
\label{chap6subfig:designInstance}
}
\subfigure[Universal Strategy]{
\includegraphics[width=.48\textwidth]{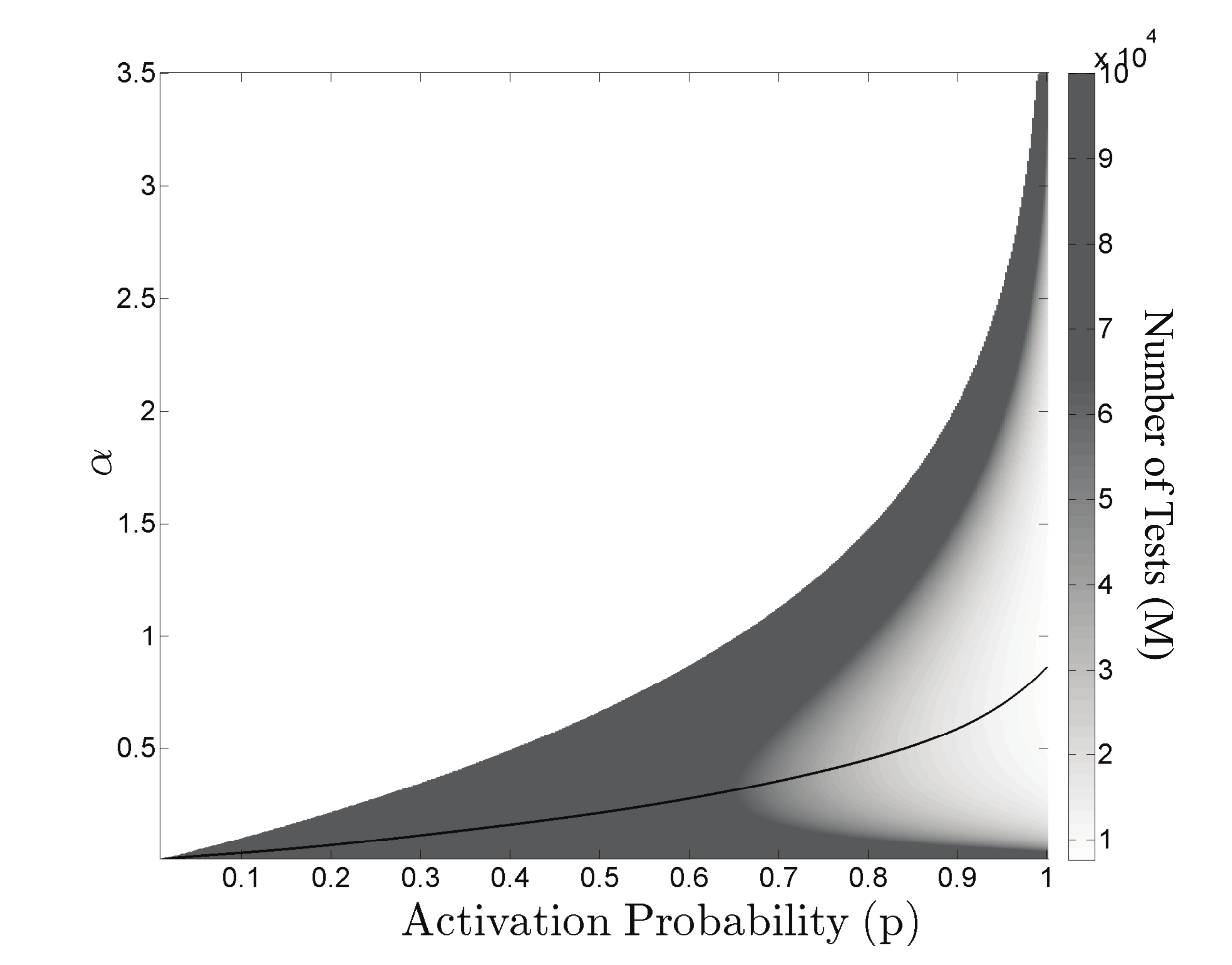}
\label{chap6subfig:designUniversal}
}
\subfigure[]{
\includegraphics[width=.48\textwidth]{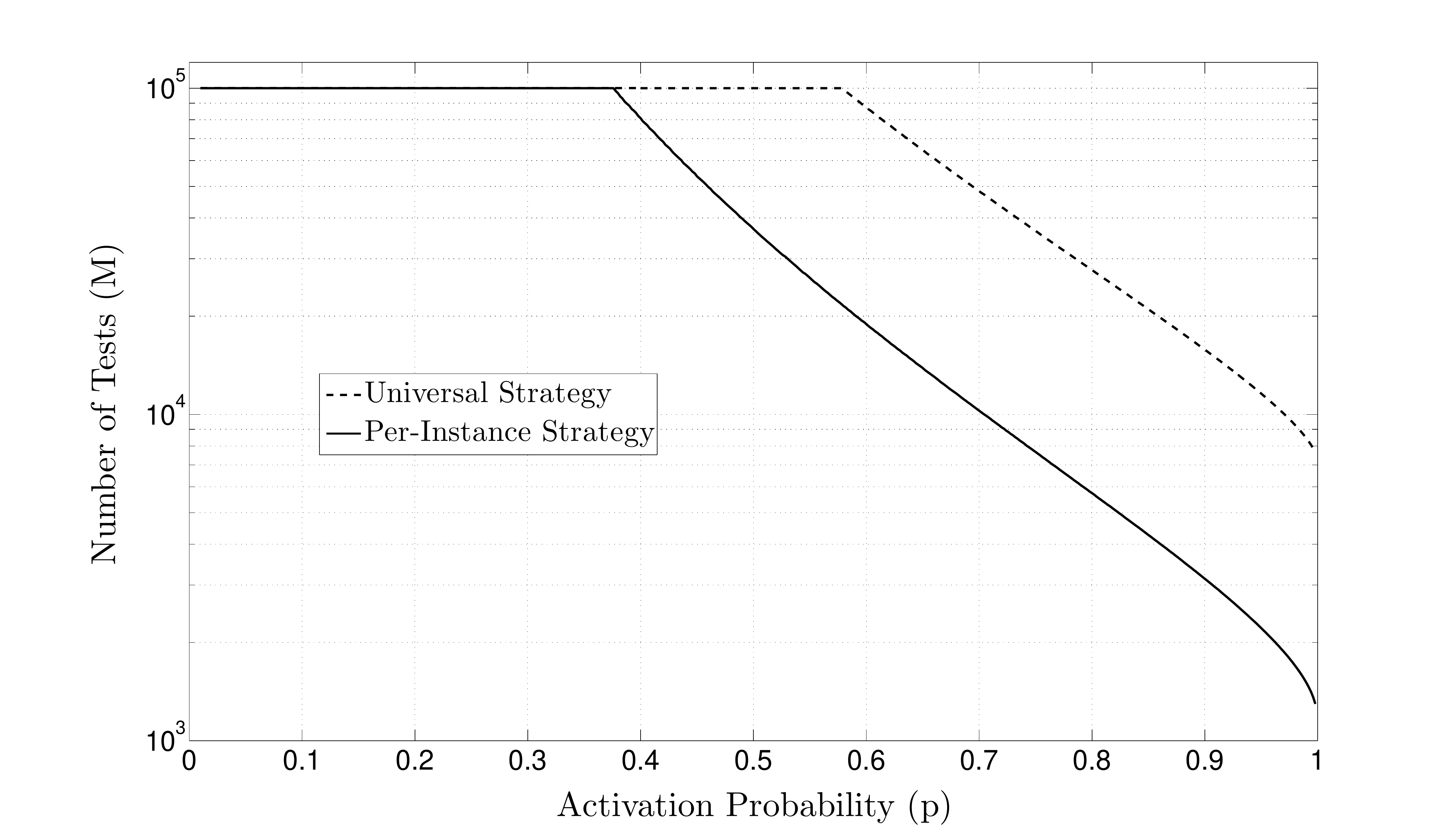}
\label{chap6subfig:blackcurves}
}
\caption{The number of tests ($M$) as the output of the design procedure for \subref{chap6subfig:designInstance} per-instance and \subref{chap6subfig:designUniversal} universal strategies, as a function of the parameter $\alpha$ and the activation probability $p$. The parameters are set to $N = 100'000$, $K = 10$ and $p_{f_1}^{ } = p_{f_2}^{ } = 0.001\,$. The black curves provide us with the value of $\alpha$ which gives the minimum number of tests for each value of the activation probability $p$. \subref{chap6subfig:blackcurves} The minimum number of tests (corresponding to the black curves in \subref{chap6subfig:designInstance} and \subref{chap6subfig:designUniversal}) for universal and per-instance strategies, as a function of the activation probability $p$.}
\end{figure*}

\begin{figure*}[t]
\centering
\subfigure[Per-Instance Strategy]{
\includegraphics[width=0.48\textwidth]{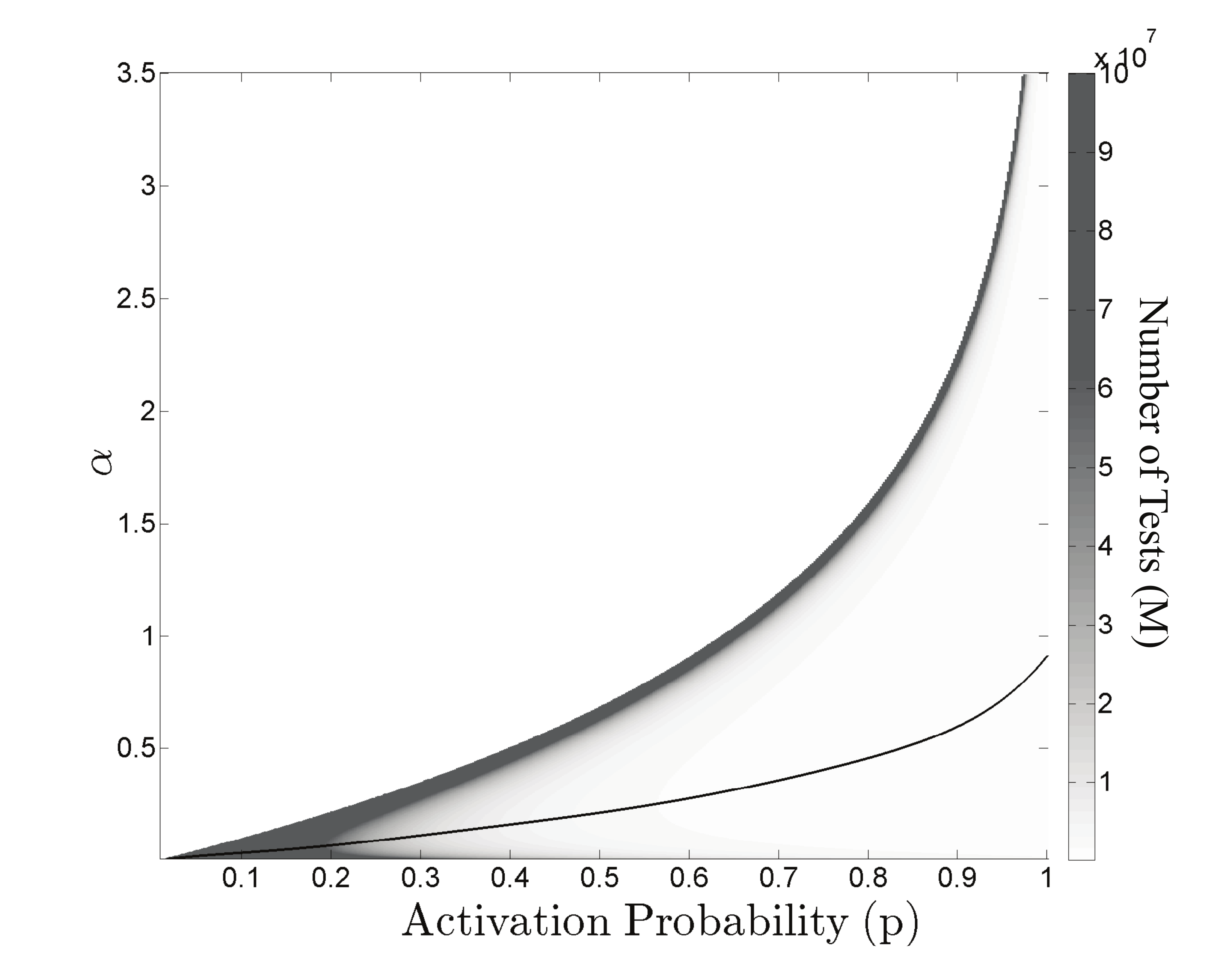}
\label{chap6subfig:designInstance2}
}
\subfigure[Universal Strategy]{
\includegraphics[width=0.48\textwidth]{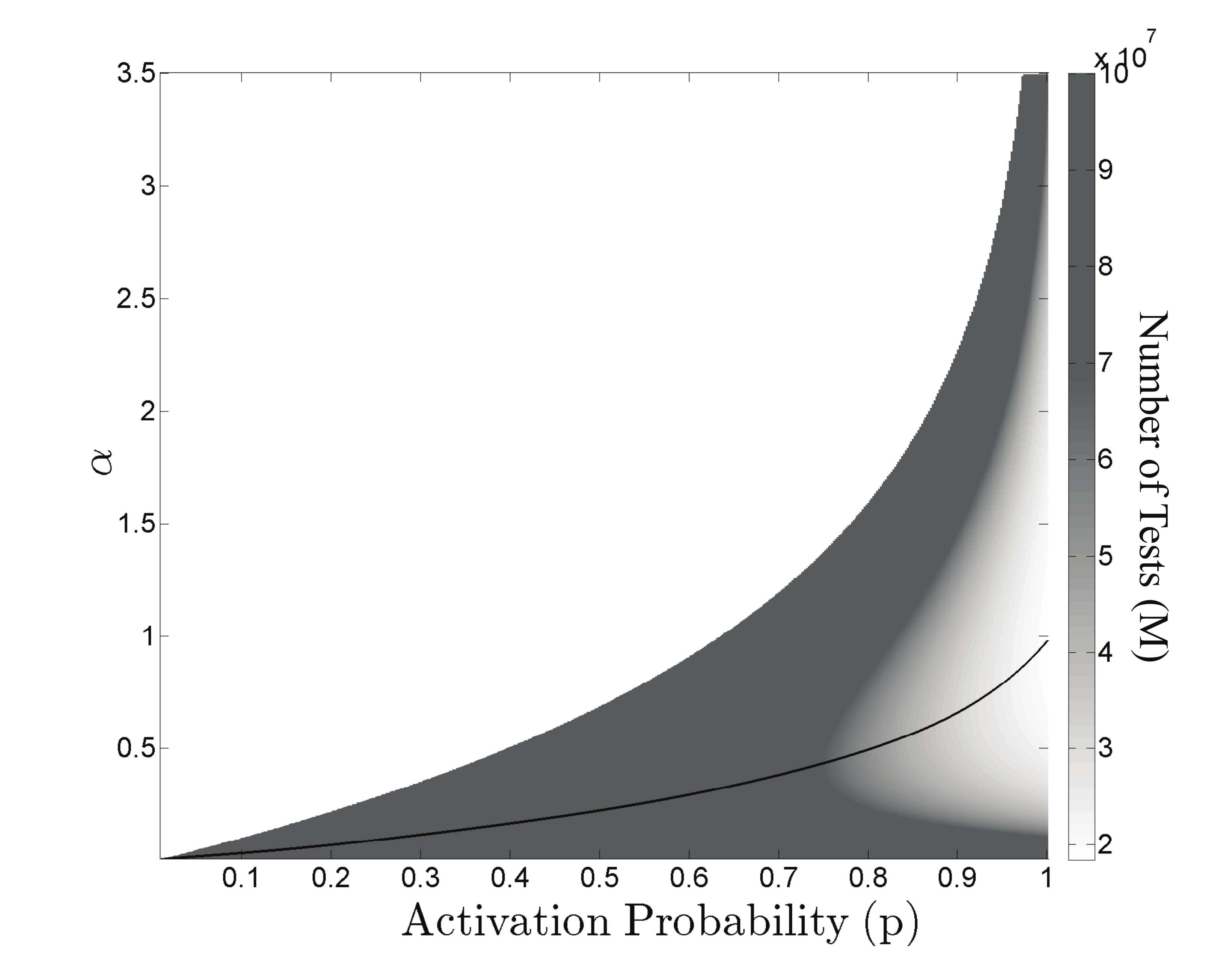}
\label{chap6subfig:designUniversal2}
}
\subfigure[]{
\includegraphics[width=0.48\textwidth]{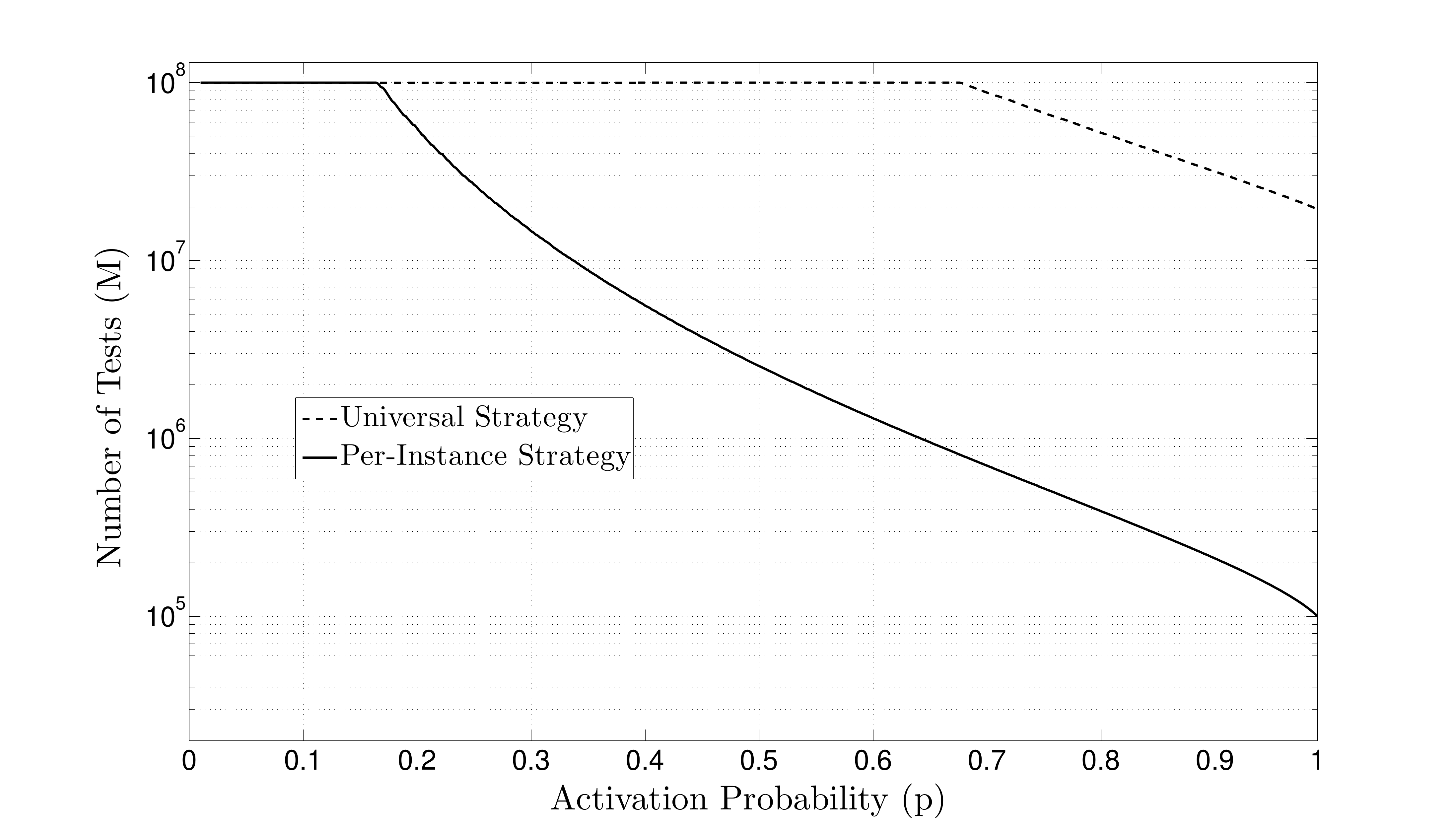}
\label{chap6subfig:blackcurves2}
}
\caption{The number of tests ($M$) as the output of the design procedure for \subref{chap6subfig:designInstance2} per-instance and \subref{chap6subfig:designUniversal2} universal strategies, as a function of the parameter $\alpha$ and the activation probability $p$. The parameters are set to $N = 100'000'000$, $K = 500$ and $p_{f_1}^{ } = p_{f_2}^{ } = 0.001\,$. The black curves provide us with the value of $\alpha$ which gives the minimum number of tests for each value of the activation probability $p$. \subref{chap6subfig:blackcurves2} The minimum number of tests (corresponding to the black curves in \subref{chap6subfig:designInstance2} and \subref{chap6subfig:designUniversal2}) for universal and per-instance strategies, as a function of the activation probability $p$.}
\end{figure*}
Figures~\ref{chap6subfig:designInstance} and~\ref{chap6subfig:designUniversal} show the number of tests, for universal and per-instance strategies, as a function of the parameter $\alpha$ and the activation probability $p$. The population size is $N = 100'000$, the number of defective items is $K = 10$ and the maximum tolerable probabilities for the two failure events are set to $p_{f_1}^{ } = p_{f_2}^{ } = 0.001\,$.  Note that the number of tests for the per-instance scenario is much less than the universal scenario and moreover, it allows us to have designs appropriate for smaller activation probabilities. The black curve in each figure connects the points with minimum number of tests for each value of the activation probability $p$, which in turn provides us with the appropriate value for the parameter $\alpha$. The black curves are extracted and shown separately in Figure~\ref{chap6subfig:blackcurves}. In Figures~\ref{chap6subfig:designInstance2},~\ref{chap6subfig:designUniversal2} and~\ref{chap6subfig:blackcurves2},  we show the output of the design procedure for $N = 100'000'000$, $K = 500$ and $p_{f_1}^{ } = p_{f_2}^{ } = 0.001\,$.

In Figure~\ref{chap6subfig:measurementsVSfailure}, we set $N = 100'000$, $K = 10$ and $p = 0.8$ and use the design procedure to plot the number of tests as a function of the probability of failure in the per-instance strategy. Then, in Figure~\ref{chap6subfig:simulationResult}, we run a numerical experiment with the same values for the parameters $N$, $K$ and $p$ to assess the performance of the recovery algorithm, with the results averaged over $4000$ trials. We set the parameters $e$ and $\alpha$ for the numerical experiment equal to those which give us the probability of failure of $0.5$ in Figure~\ref{chap6subfig:measurementsVSfailure} (which are $e = 40$ and $\alpha = 0.44$) and change the number of tests. Note that although we expect a failure probability of around $0.5$ for $M=3000$ tests according to Figure~\ref{chap6subfig:measurementsVSfailure}, the recovery performance is much better in numerical simulations. This can be explained by noting that the upper bounds for the failure probabilities are not tight in general.

\begin{figure}[t]
\centering
\includegraphics[width=\columnwidth]{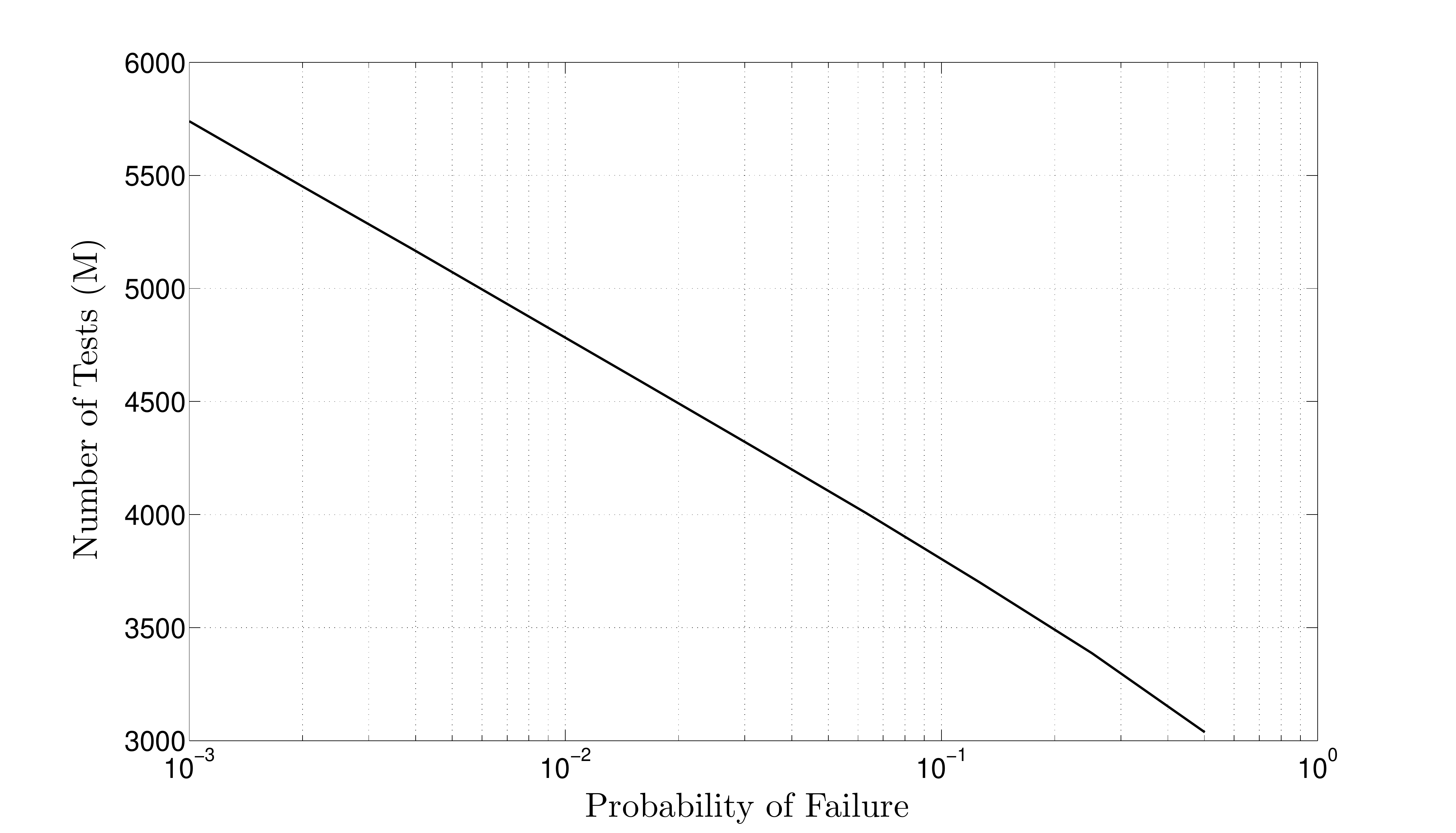}
\caption{The number of tests as a function of the failure probability for the per-instance scenario. The parameters are $N = 100'000$, $K = 10$ and activation probability $p = 0.8$.}
\label{chap6subfig:measurementsVSfailure}
\end{figure}

\begin{figure}
\centering
\includegraphics[width=\columnwidth]{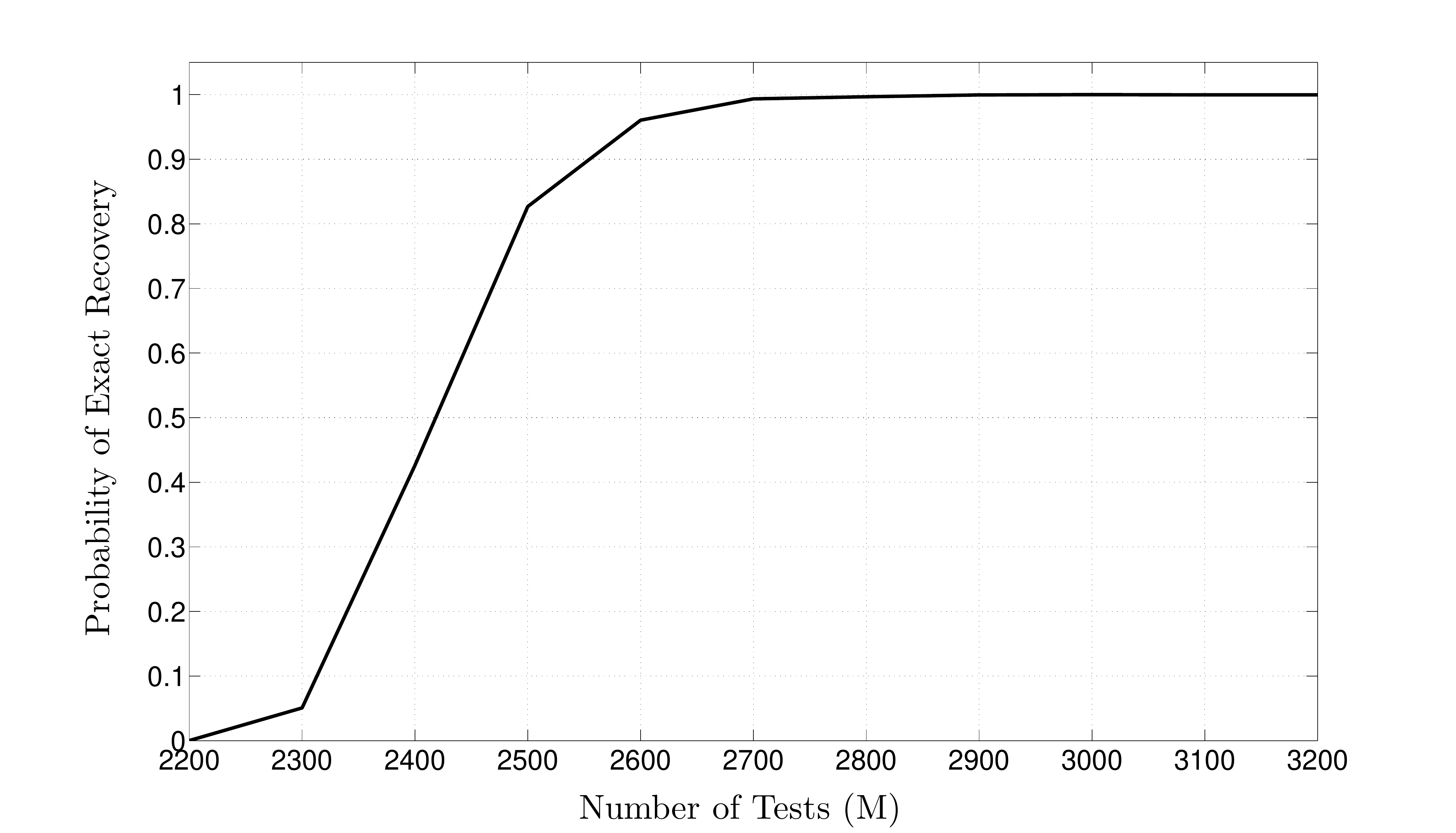}
\caption{The simulated probability of exact recovery for the per-instance strategy for $N = 100'000$, $K = 10$ and $p = 0.8$, averaged over $4000$ trials. Although the design procedure expects a high probability of failure for $M = 3000$ tests (see Figure~\ref{chap6subfig:measurementsVSfailure}), the recovery performance is much better in numerical simulations.}\label{chap6subfig:simulationResult}
\end{figure}

\end{section}

\begin{section}{Conclusion}
\label{chap6:summary}
We studied the problem of identifying a small number of defective items among a large population, using collective samples. With the viral epidemic application in mind, we investigated the case where the recovery algorithm may possess only partial knowledge about the sampling process, in the sense that the defective items can become inactive in the test results. We showed that by using a probabilistic model for the sampling process, one can design a non-adaptive contact matrix which leads to the successful identification of the defective items with overwhelming probability $1-o(1)$. We considered two strategies for the design procedure. In per-instance design, the contact matrix is suitable for each sparse input vector with overwhelming probability while in universal design, this is true for all sparse inputs. To this end, we proposed a probabilistic design procedure which requires a ``small'' number of tests to single out the sparse vector of defective items. More precisely, we showed that for an activation probability $p$, the number of tests sufficient for identification of up to $K$ defective items in a population of size $N$ is given by $M = O(K \log (N) / p^3)$ for the per-instance scenario and $M = O(K^2 \log (N/K) / p^3)$ for the universal scenario. Moreover, we proposed a simple decoder which is able to successfully identify the defective items with complexity of $O(MN)$. Finally, we provided a systematic design procedure which gives the number of tests $M$, along with the design parameters $\alpha$ and $e$, required for successful recovery. As expected, the numerical experiments showed that the number of tests provided by the design procedure overestimates the true one required to achieve a specified probability of failure. As a complement to this work, one can also consider the effects of false positives and false negatives on the required number of tests. We leave this issue for future work.
\end{section}

%\addtolength{\textheight}{-2.5cm}

\bibliographystyle{ieeetr}\bibliography{references}
\vfill

\begin{IEEEbiographynophoto}{Mahdi Cheraghchi}
received the B.Sc.\ degree in computer engineering from Sharif University of Technology, Tehran, Iran, in 2004 and  the M.Sc.\ and Ph.D.\ degrees in computer science from EPFL, Lausanne, Switzerland, in 2005 and 2010, respectively. Since October 2010, he has been a post-doctoral researcher at the University of Texas at Austin, TX. His research interests include the interconnections between coding theory and theoretical computer science, derandomization theory and explicit constructions.
\end{IEEEbiographynophoto}

\begin{IEEEbiographynophoto}{Ali Hormati}
received the B.Sc.\ and M.Sc.\ degrees in communication systems from Sharif University of Technology, Tehran, Iran, in 2001 and 2003, respectively, and the Ph.D.\ degree in electrical engineering from EPFL, Lausanne, Switzerland in 2010.
From 2003 to 2006, he worked as a communication systems engineer in industry on the communication protocols for software radio systems. His collaboration with Qualcomm company during the Ph.D.\ studies resulted in patent filings on new techniques for MIMO channel estimation in OFDM and CDMA communication systems.  He is now a postdoctoral researcher in the Audiovisual Communications Laboratory at EPFL. His research interests include sampling theory, compressed sensing and inverse problems in acoustic tomography.
\end{IEEEbiographynophoto}

\begin{IEEEbiographynophoto}{Amin Karbasi}
received the B.Sc.\ degree in electrical engineering in 2004 and M.Sc.\ degree in communication systems in 2007 from EPFL , Lausanne, Switzerland. Since March 2008, he has been a Ph.D. student at EPFL. He was the recipient of the ICASSP 2011 best student paper award and ACM/Sigmetrics 2010 best student paper award. His research interests include graphical models, large scale networks, compressed sensing and information theory.
\end{IEEEbiographynophoto}

\begin{IEEEbiographynophoto}{Martin Vetterli}

received the Dipl.\ El.-‐Ing.\ degree from ETHZ, Switzerland, in 1981, the M.Sc.\ degree from Stanford University, CA, in 1982, and the Doctorat \`es Sciences degree from EPFL, Lausanne, Switzerland, in 1986.
He was a research assistant at Stanford and EPFL, and has worked for Siemens and AT\&T Bell Laboratories. In 1986 he joined Columbia University in New York, where he was last an Associate Professor of Electrical Engineering and co‐director of the Image and Advanced Television Laboratory. In 1993, he joined the University of California at Berkeley, where he was a Professor in the Department of Electrical Engineering and Computer Sciences until 1997, and now holds an Adjunct Professor position.
Since 1995 he is a Professor of Communication Systems at EPFL, Switzerland, where he chaired the Communications Systems Division (1996/97), and heads the Audiovisual Communications Laboratory. 
%From 2001 to 2004 he directed the National Competence Center in Research on mobile information and communication systems. He also was a Vice--President at EPFL from October 2004 to February 2011 in charge, among others, of international affairs and computing services. 
He has held visiting positions at ETHZ (1990) and Stanford (1998), 
%From March 2011 on, he is Dean of the School of Computer and Communication Sciences of EPFL.
is a fellow of IEEE, a fellow of ACM, a fellow of EURASIP, and a member of SIAM. 
%He is on the editorial boards of Applied and Computational Harmonic Analysis, the Journal of Fourier Analysis and Application and the IEEE Journal on Selected Topics in Signal Processing.
He received the Best Paper Award of EURASIP in 1984, the Research Prize of the Brown Bovery Corporation (Switzerland) in 1986, the IEEE Signal Processing Society's Senior Paper Awards in 1991, in 1996 and in 2006 (for papers with D.\ LeGall, K.\ Ramchandran, and Marziliano and Blu, respectively). He won the Swiss National Latsis Prize in 1996, the SPIE Presidential Award in 1999, the IEEE Signal Processing Technical Achievement Award in 2001, and the IEEE Signal Processing Society Award in 2010.
% and is an ISI highly cited researcher in engineering. He was a member of the Swiss Council on Science and Technology from 2000 to 2003.
%He was a plenary speaker at various conferences (e.g. IEEE ICIP, ICASSP, ISIT) and is the co-author of three books with J. Kovacevic, ``Wavelets and Subband Coding'', 1995, with P. Prandoni ``Signal Processing for Communications'', 2008 and with J. Kovacevic and V.K. Goyal, ``Fourier and Wavelet Signal Processing'', 2010.
He has published about 145 journal papers on a variety of topics in signal/image processing and communications, co-authored three books, and holds a dozen patents.
His research interests include sampling, wavelets, multirate signal processing, computational complexity, signal processing for communications, digital image/video processing, joint source/channel coding, signal processing for sensor networks and inverse problems like acoustic tomography.
\end{IEEEbiographynophoto}
%\enlargethispage{-2in}
\vfill

\end{document}